\newcommand{\blind}{0}
\def\inprob{\stackrel{p}{\rightarrow}}
\def\indist{\rightsquigarrow}
\def\ind{\perp\!\!\!\perp}
\newcommand{\var}{\text{var}}
\newcommand{\Pb}{\mathbb{P}}
\newcommand{\Pn}{\mathbb{P}_n}
\newcommand{\E}{\mathbb{E}}
\newcommand{\R}{\mathbb{R}}
\definecolor{Gray}{gray}{0.9}
\newcommand{\bX}{\mathbf{X}}
\newcommand{\bx}{\mathbf{x}}
\newcommand{\bO}{\mathbf{O}}
\newcommand{\bdelta}{\boldsymbol\delta}
\DeclareSymbolFont{bbold}{U}{bbold}{m}{n}
\DeclareSymbolFontAlphabet{\mathbbold}{bbold}
\newcommand{\one}{\mathbbold{1}}
\newtheorem{theorem}{Theorem}
\newtheorem{lemma}{Lemma}
\newtheorem{proposition}{Proposition}
\theoremstyle{remark}
\newtheorem{assumption}{Assumption}
\newtheorem{condition}{Condition}
\begin{document}

\def\spacingset#1{\renewcommand{\baselinestretch}%
{#1}\small\normalsize} \spacingset{1}


\if0\blind
{
  \title{ \bf Survivor-complier effects in the presence of selection on treatment, with application to a study of prompt ICU admission  }
  
  \author{Edward H. Kennedy \thanks{Assistant Professor, Department of Statistics, Carnegie Mellon University (edward@stat.cmu.edu).} \\ Carnegie Mellon University
    \and
    Steve Harris \thanks{Clinical Lecturer in Anaesthesia and Critical Care,  University College London.} \\
    University College London Hospital
    \and
    Luke J. Keele \thanks{Associate Professor,  McCourt School of Public Policy, Georgetown University.} \\
    Georgetown University 
     }
  \maketitle
  \thispagestyle{empty}
  \setcounter{page}{0}
} \fi

\if1\blind
{
  \bigskip
  \bigskip
  \bigskip
  \begin{center}
    {\LARGE\bf Survivor-complier effects in the presence of selection on treatment, with application to a study of prompt ICU admission  }
\end{center}
  \setcounter{page}{0}
  \medskip
} \fi

\begin{abstract}
Pre-treatment selection or censoring (`selection on treatment') can occur when two treatment levels are compared ignoring the third option of neither treatment, in `censoring by death' settings where treatment is only defined for those who survive long enough to receive it, or in general in studies where the treatment is only defined for a subset of the population. Unfortunately, the standard instrumental variable (IV) estimand is not defined in the presence of such selection, so we consider estimating a new survivor-complier causal effect. Although this effect is generally not identified under standard IV assumptions, it is possible to construct sharp bounds. We derive these bounds and give a corresponding data-driven sensitivity analysis, along with  nonparametric yet efficient estimation methods. Importantly, our approach allows for high-dimensional confounding adjustment, and valid inference even after employing machine learning. Incorporating covariates can tighten bounds dramatically, especially when they are strong predictors of the selection process. We apply the methods in a UK cohort study of critical care patients to examine the mortality effects of prompt admission to the intensive care unit, using ICU bed availability as an instrument. 
\end{abstract}

\noindent%
{\it Keywords:} censoring by death, influence function, noncompliance, observational study, partial identification, semiparametric theory.
\vfill

\thispagestyle{empty}

\newpage

\spacingset{1.45} 
\spacingset{1}

\section{Introduction}

\subsection{Time to ICU Admission}

In UK hospitals, nurses can call for a critical care assessment to judge whether deteriorating patients on general hospital wards should be transferred to the intensive care unit (ICU) for a higher level of critical care. When a critical care assessment is ordered, staff from the ICU assess whether the patient would improve under ICU care. Patients can at this time be admitted to the ICU, but they may not be directly transferred to the ICU. If ICU bed space is limited, the patient is admitted but often must wait to be transferred to the ICU. Clinical guidelines in the UK indicate that a patient should be transferred to the ICU within four hours \autocite{icu2013}.

One open clinical question is whether delayed admission to the ICU is harmful. More specifically: does admission to the ICU in less than four hours contribute to decreased mortality? This question is made more complex by two factors. First, time to admission is undefined for patients that are not admitted to the ICU at the time of assessment. That is, we do not observe a time to transfer for any patient that is not admitted. As such, important information is censored for part of the study population. Second, the mix of patients makes naive comparisons of mortality rates for patients quickly transferred to the ICU to those with delayed admission deceiving. It is very likely that patients with short wait times tend to be sicker than patients with delayed transfer. If that is the case, prompt transfer to the ICU would be associated with higher not lower mortality rates.  However, this association arises because of the characteristics of those patients who are promptly transferred, rather than because of their prompt ICU care.

A carefully designed randomized experiment could overcome these difficulties. If we were to randomize all admitted to patients to different wait times for admission, we could identify the causal effect among a specific sub-population. However, ethical constraints make an experiment of this type infeasible. Alternatively we might seek to find a natural experiment. A natural experiment is some naturally occurring circumstance that results in haphazard if not as-if random assignment of the treatment. An instrument is one type of natural experiment. In our case, an instrumental variable (IV) would give a haphazard nudge towards being admitted to the ICU, but would only affect outcomes indirectly through ICU admission and by no other means. 

However, standard IV methods only identify the effect of being admitted to the ICU, while we seek to identify the causal effect of delayed transfer to the ICU. Since wait times are undefined or censored for those patients not admitted to the ICU, standard IV methods fail. Therefore we develop methods for estimating the effect of delayed transfer, among the principal stratum of patients who would always be admitted to the ICU regardless of bed availability. Unfortunately this effect is not point-identified under standard IV assumptions, but we can nonetheless construct sharp covariate-adjusted bounds and corresponding flexible estimation methods. Combining IV methods with partial identification approaches has a long history, to which our work contributes \autocite{manski1990nonparametric, manski1996learning, balke1997bounds, kitagawa2009identification, siddique2013partially, MeaPac2013}. Next, we describe the data in greater detail.

\subsection{Data \& Descriptive Statistics}

The data we use is from a prospective cohort study of general ward patients who were referred to critical care in the ICU in 48 National Health Service (NHS) hospitals between 1 November 2010 and 31 December 2011 (the SPOTlight study) \autocite{harris2015delay}. The data record information for every general ward patient that was assessed for possible admission to the ICU including the decision to admit a patient to ICU care and the time from assessment to actual transfer into the ICU. Importantly, data collection also included a measure of ICU occupancy rates at the specific time of the patient assessment. This measure of ICU bed availability at time of assessment serves as an instrument in our study.  The data collection also included a number of baseline covariates such as age, septic diagnosis (0/1), peri-arrest (0/1), and measures of physiology. These measures of physiology include the Intensive Care National Audit \& Research Centre (ICNARC) physiology score, the NHS National Early Warning Score (NEWS) which measures whether respiratory rate, oxygen saturations, temperature, systolic blood pressure, pulse rate, a level of consciousness vary from the norm, and the Sequential Organ Failure Assessment (SOFA) score which ranges from 0 to 24, with higher scores indicating a greater degree of organ failure. The data also record the patient's existing level of care at assessment and recommended level of care after assessment using the UK Critical Care Minimum Dataset (CCMDS) levels of care. These levels are 0 and 1 for normal ward care, 2 for care within a high dependency unit, and 3 for care with intensive care unit. Finally, the data include indicators for whether it was the weekend, out of hours (between 7 PM and 7 AM), or the months from November to February.  The primary endpoint is 28-day mortality. 

In total, the data contain information on 15158 patients on general hospital wards that were assessed for admission to the ICU. From this total, we exclude 2141 patients due to the presence of a treatment limitation order which excluded the possibility of transfer to the ICU.  We further exclude six additional patients that have missing data on the availability of beds in the ICU at the time of assessment, giving us a study population of 13011 patients that were assessed for possible transfer to the ICU. Of these 13011 patients, 27\% were admitted to the ICU with a median time to transfer of 2 hours. However, 8.3\% of patients that were initially admitted were never transferred to the ICU. Figure~\ref{fig:icu_wait} plots the distribution of wait times for transfer to the ICU, where it is clear that while many patients are transferred within four hours, the time for many patients significantly exceeds that threshold.

\begin{figure}
\centering
  \includegraphics[scale=.5]{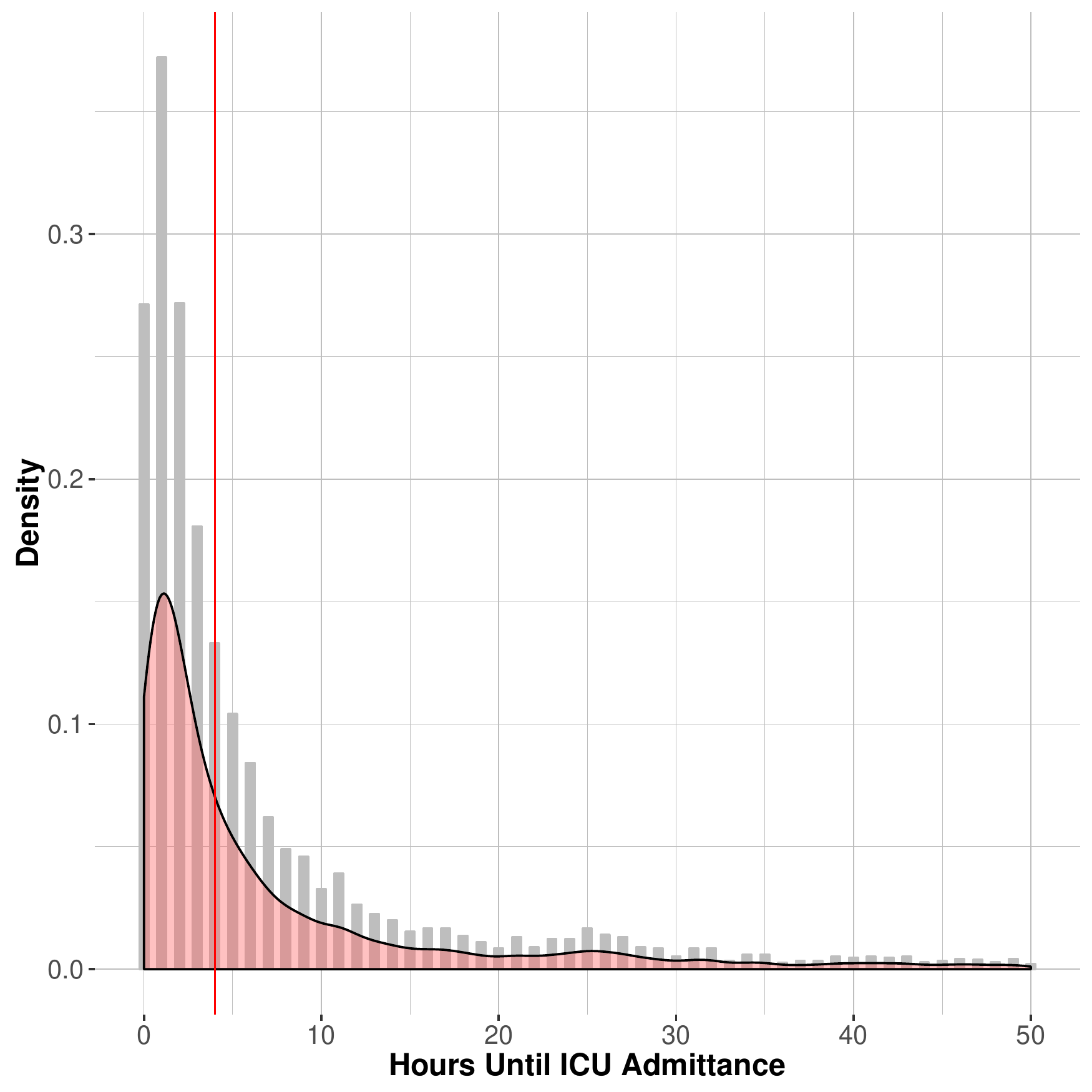}
\caption{Distribution of ICU wait times. Vertical red line indicates four hour mark. Three observations with wait times in excess of 50 hours removed from the data.}
  \label{fig:icu_wait}
\end{figure}

\subsection{An Instrumental Variables Approach}

As we noted above, we address the question of ICU wait times within an IV framework. Specifically, we use ICU bed availability as an instrument for admission to the ICU. That is, if many beds are available at the time of assessment, this should serve as a haphazard encouragement for ICU admittance. Under a set of causal identification assumptions, IVs allow for the identification of causal effects subject to a form of unobserved confounding that is likely in our application \parencite{angrist1996identification}. See \textcite{keele2016stronger} for an in-depth study of bed availability as an instrument for ICU admission.  

Any use of IVs requires careful assessment of whether the identification assumptions are plausible. However, standard application of IV methods in our application will fail due to selection bias.  As we noted above, the waiting time for transfer to the ICU is only observed when a patient is admitted to the ICU. Thus for patients that are assessed but not admitted to critical care, wait times and the counterfactual mortality outcome under wait times are undefined. In our application, selection arises through this censoring, since the treatment may be undefined due to selection that occurs prior to treatment. 

Some recent papers haved considered settings where the outcome can be undefined but treatment is always observed, i.e., it is assumed that the selection or censoring event (e.g., death) can only occur \textit{after} treatment \parencite{imai2008sharp,yang2016using}. However, this is substantially different from our application, where the selection is based on treatment status; as we show, selection on treatment leads to different causal estimands and requires new methods. More relevant is the work by \textcite{swanson2015selecting,ertefaie2016selection}, who first discovered and discussed the fundamental selection-on-treatment problem with IVs. Our work adds to these papers in several critical respects. First, \textcite{swanson2015selecting} importantly pointed out the problem and discussed the biases of standard IV analyses, but did not present any solutions. \textcite{ertefaie2016selection} followed with a sensitivity analysis approach for the same estimand we consider in this paper, but their approach requires numerous sensitivity parameters, does not yield provably sharp bounds, and does not accommodate covariates (even low-dimensional). In contrast, our approach can exploit potentially high-dimensional covariate information to make bounds as sharp as possible (and weaken identifying assumptions), and otherwise is indexed by only two interpretable sensitivity parameters. Importantly we use influence function theory to develop nonparametric estimators that can converge at fast parametric rates, even when based on flexible machine learning methods; our approach can similarly be extended to other partial identification problems involving high-dimensional covariates, which we feel is a critical but understudied area.

This article is organized as follows. In Section~\ref{sec:note} we outline notation, define the causal estimand, and derive identification conditions. In Section~\ref{sec:methods} we outline a set of sharp nonparametric bounds for the causal estimand of interest. We also derive a data-driven sensitivity analysis that uses the bounds to allow for more informative inferences, and propose new estimators that allow for high-dimensional covariate adjustment. We then apply the methods in the aforementioned UK cohort study to examine mortality effects of prompt admission to the ICU. In Section~\ref{sec:sim} we use simulations to explore the conditions under which the bounds are most informative. Section~\ref{sec:dis} concludes.

\section{Notation, Causal Estimand, \& Identification}
\label{sec:note}

\subsection{Notation}

We suppose we observe an independent and identically distributed sample $(\bO_1, ..., \bO_n)$ from distribution $\Pb$ where $\bO=(\bX, Z, S, A, Y)$ with $\bX$ denoting covariate information, $Z$ a binary instrument, $S$ an indicator of the selection event, $A$ a binary treatment that is only defined when $S=1$, and $Y$ some outcome of interest (which may also only be defined when $S=1$). In our example, $Z$ is an indicator of bed availability (with $Z=1$ indicating that there are more than the average number of beds available), and $S$ indicates whether a patient was accepted for ICU admission. Treatment $A=1$ indicates a wait time less than 4 hours before actually being transferred into the ICU. For all patients not admitted to the ICU (i.e., with $S=0$), wait time until transfer into the ICU is undefined, as are counterfactual mortality outcomes under different waiting times.  In our application, $Y$ is a binary indicator of death; however all results can also be extended to arbitrary bounded outcomes.

A directed acyclic graph showing the causal structure is given in Figure~\ref{fig:dag}, with arrows between variables indicating causal relationships. The graph shows baseline covariates $\bX$, instrument $Z$, selection event $S$ (the box enclosing $S$ indicates that subsequent variables are only defined conditional on $S=1$), treatment $A$, and outcome $Y$, along with any unmeasured confounders $U$. If only time-ordering is taken into account, then all arrows would be present (excluding those that allow the future to cause the past); however the relationships underlying the gray dotted arrows are assumed absent by standard IV assumptions, as discussed in the next subsection.

\begin{figure}[ht]
\spacingset{1}
\begin{center}
\includegraphics[width=3.7in]{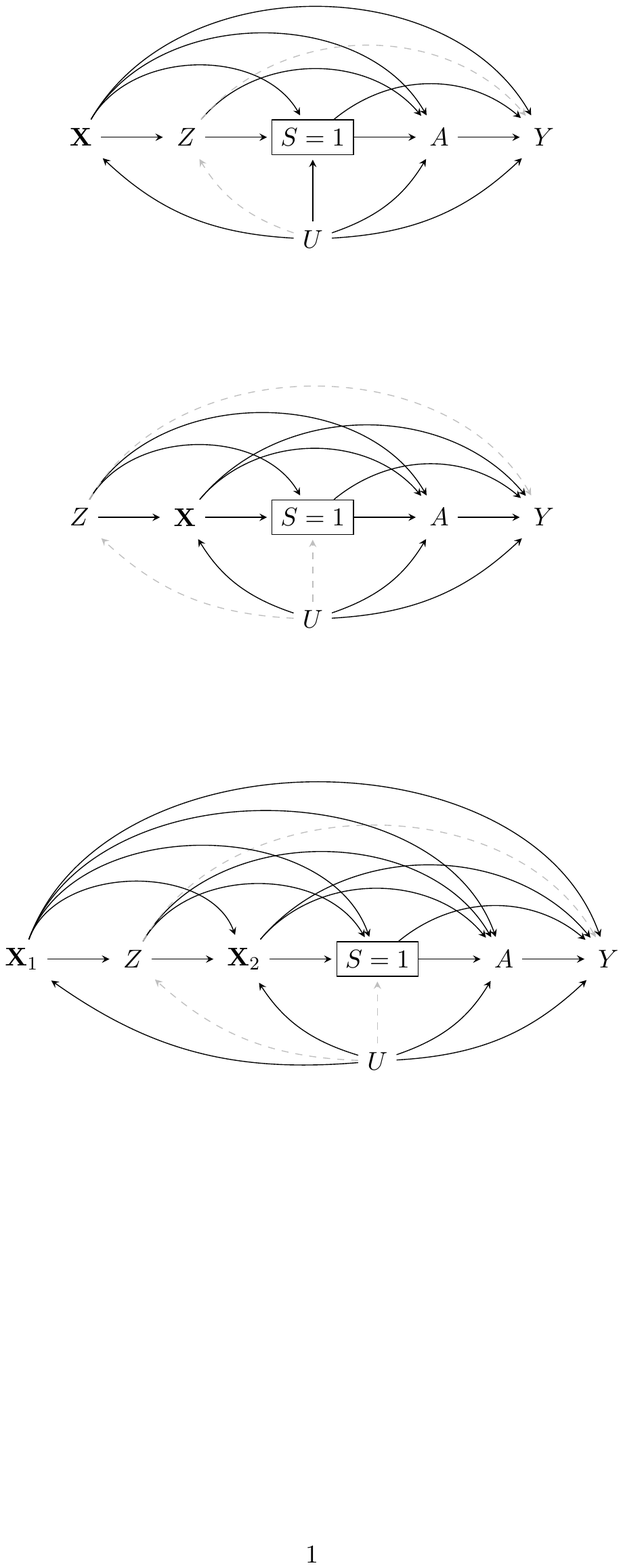}
\end{center}
\caption{Directed acyclic graph showing covariates $\bX$, instrument $Z$, selection event $S$, treatment or exposure $A$, outcome $Y$, and unmeasured variables $U$. The exposure $A$ (and possibly outcome $Y$) are only defined when $S=1$. Gray dotted arrows indicate relationships that are assumed absent in a standard instrumental variable analysis. \label{fig:dag}}
\end{figure}

Although we illustrate the data structure in graphical terms in Figure~\ref{fig:dag}, we use potential outcome notation to formalize assumptions and characterize causal effects \autocite{rubin1974estimating}. Specifically we let $(S^{(z)}, A^{(z)}, Y^{(z)})$ denote selection, treatment, and outcome had the instrument been set to $Z=z$, and similarly we let $(Y^{(z,a)},Y^{(a)})$ denote outcomes had the instrument and/or treatment been set to $Z=z$ and/or $A=a$. Our goal is to learn about the distribution of the effect  $Y^{(a=1)}-Y^{(a=0)}$, but this contrast only makes sense for some units in the population, as we will make more explicit in the next subsection.

Finally, we note that selection on treatment can also arise when the goal is to compare only two levels of a multivalued treatment \autocite{swanson2015selecting, ertefaie2016selection}. In this setting, $S=1$ would indicate that one of the two treatment levels of interest was received (e.g., statin type), while $S=0$ would indicate receipt of an alternative treatment (e.g., no statins at all); here $A$ would indicate which of the two treatments of interest was received, and would be undefined when $S=0$. This kind of selection bias also occurs when patients die during follow-up, after encouragement by the instrument but prior to the measurement of treatment. Then the selection event $S$ is survival, so that $S=0$ means a patient died; in this case subsequent treatment and outcomes are undefined, and it often does not make sense to consider what values might have been had the patient survived \autocite{tchetgen2014identification}.

\subsection{Causal Estimand}

In this subsection we make explicit our target of inference, i.e., formally define which causal contrast we aim to identify and estimate. Since the treatment of interest $A$ is only defined when $S=1$ (e.g., how long a patient waits to be transferred to the ICU is only defined for patients who are admitted to be transferred to the ICU), and since setting $S=1$ for all patients might not be a meaningful counterfactual (e.g., some patients might be too healthy or too sick to ever be admitted to the ICU), it is reasonable to restrict inference to the subgroup of patients who would always be selected, regardless of instrument value, i.e., those patients with
$$ S^{(z=1)}=S^{(z=0)}=1  $$
with probability one. These are the patients who would always be admitted to the ICU, regardless of bed availability. Note that this is really the only subgroup for which we can discuss treatment effects, since treatment is not defined if $S=0$. This is an example of a principal stratum since it is defined based on a cross-classification of joint counterfactuals \autocite{frangakis2002principal}. This particular kind of principal stratum commonly arises in censoring-by-death settings \autocite{zhang2003estimation, tchetgen2014identification} where counterfactuals are defined based on treatment rather than instrument values, as well as in vaccine effect studies where interest centers on effects among sicker patients who would always be infected regardless of treatment status \autocite{hudgens2003analysis}.


One could also argue that the distribution of $Y^{(a=1)}-Y^{(a=0)}$ should be explored in a further refined subgroup, in particular only within those always-selected or survivor patients who would actually respond to encouragement by the instrument, i.e., those with
$$ A^{(z=1)} > A^{(z=0)} . $$
These units are commonly termed compliers in the IV literature \autocite{angrist1996identification}. Under appropriate assumptions, these are patients whose wait time would depend on bed availability. That is, these patients would only wait less than 4 hours if there were beds availabile at the time of critical care assessment. There is an ongoing debate about the merits of exploring causal effects within such subgroups \autocite{imbens2014instrumental,swanson2014think}. In this paper we take a pluralistic perspective, viewing complier effects as meaningful but only one piece of the puzzle, which can be combined with inference about other subgroups. Our focus on them is driven by both subject matter concerns in our ICU application and because they are a popular target of inference in IV studies.

As such, in our analysis we focus on the causal effect among those always-selected patients who would respond to IV encouragement, i.e., 
$$ \psi = \E\Big( Y^{(a=1)} - Y^{(a=0)} \Bigm| S^{(z=1)}=S^{(z=0)}=1, A^{(z=1)}>A^{(z=0)} \Big) . $$
We call this parameter the survivor-complier average treatment effect, or SCATE. It represents the mortality risk difference due to waiting less than four hours ($A=1$) versus more than four hours ($A=0$), among patients who would always be admitted to the ICU but whose wait time would depend on bed availability. Negative values of this esitmand indicate that shorter wait times reduce mortality in this subgroup.

\subsection{Identification Conditions}

Next, we summarize standard IV assumptions that we rely on in this paper, and discuss why the SCATE $\psi$ is not point-identified under them. We also give an expression for $\psi$ that will be important for constructing bounds and interpretable sensitivity analyses. Since $\psi$ is expressed in terms of only partially observed potential outcomes, we will need causal assumptions to be able to learn anything about it. Instead of tailoring our assumptions so as to ensure point identification (i.e., so that $\psi$ can be uniquely expressed as a functional of the observed data distribution $\Pb$), we instead see what we can learn about the SCATE $\psi$ under standard IV assumptions. Specifically we assume the following:

\begin{assumption}[Consistency]
\begin{align*}
S &= ZS^{(z=1)} + (1-Z)S^{(z=0)}, \\
A &= ZA^{(z=1)} + (1-Z)A^{(z=0)} \ \text{(if $S=1$)}, \\
Y &= ZY^{(z=1)} + (1-Z)Y^{(z=0)} \ \text{(if $S=1$)}. 
\end{align*}
\end{assumption}
\begin{assumption}[Positivity]
$0<\Pb(Z=1 \mid \bX) < 1$.
\end{assumption}
\begin{assumption}[Instrumentation]
$\Pb(A^{(z=1)}>A^{(z=0)} , S^{(z=1)}=S^{(z=0)}=1) \geq \epsilon > 0$.
\end{assumption}
\begin{assumption}[Unconfoundedness]
$Z \ind ( S^{(z)}, A^{(z)}, Y^{(z)}) \mid \bX$ for $z=0,1$.
\end{assumption}
\begin{assumption}[Exclusion]
$\E(Y^{(z,a)} - Y^{(a)} \mid S^{(z=1)} = S^{(z=0)}=1)=0$ for $z=0,1$.
\end{assumption}
\begin{assumption}[Monotonicity Restriction]
\begin{equation*}
\begin{gathered}
\Pb(S^{(z=1)} \geq S^{(z=0)}) = 1 , \\
\Pb(A^{(z=1)} \geq A^{(z=0)} \mid S^{(z=1)} = S^{(z=0)}=1) = 1 .
\end{gathered}
\end{equation*}
\end{assumption}

Analogs of Assumptions 1--6 are all standard in the IV literature, although monotonicity is sometimes replaced by effect homogeneity or no-interaction assumptions \autocite{robins1994correcting, hernan2006instruments,tan2010marginal}. Under consistency, we assume that any variation in how patients receive care in the ICU corresponds to the same potential outcomes. The positivity assumption implies that, regardless of covariate value, each patient has some chance of receiving each instrument value. Under instrumentation, the survivor-complier subgroup must be nonempty. We can partly assess this assumption from the data.  When the number of ICU beds available is higher than average, patients are admitted to the ICU 61\% of the time, and they waited less than four hours 42\% of the time. When the number of beds available was lower than average, patients were admitted to the ICU 53\% of the time but only 34\% waited less than four hours. Unconfoundedness says that (within strata of observed covariates) the instrument is as good as randomized, since it is conditionally independent of potential outcomes. In the data, bed availability in general tells us little about patient characteristics or risk severity. If the exclusion restriction holds among the always-selected, the instrument has no direct effect on outcomes and thus can only affect outcomes through the treatment.  We judge the exclusion restriction plausible, since bed occupancy in the ICU is measured when the patient is assessed for ICU admission. To violate the exclusion restriction, there would have to be some aspects of bed availability at the time of assessment that contributes directly to patient mortality. Finally, the monotonicity assumption implies that there are no units who would always do the opposite of what the instrument encourages them to do, either with respect to selection or treatment; in other words this assumption rules out the possibility of `defiers'. The monotonicity assumption would be violated if the hospital staff assessing patients for ICU admission encouraged prompt transfer to the ICU when there were few beds available, and discouraged prompt transfer when there were many beds available. Such behavior seems unlikely. See \autocite{angrist1996identification, hernan2006instruments} for a more in-depth discussion of Assumptions 1--6, including ways in which they could be violated.

With $(Z,S,A)$ all binary, there are nine possible principal strata, three of which are ruled out by monotonicity; this is illustrated in Table~\ref{tab1}. Monotonicity implies that any patient who would be admitted with few beds available would also be admitted with more beds available, and among those patients who would always be admitted to the ICU, having more beds available would not delay transfer times relative to having fewer beds available. Thus, in this application, monotonicity is probably a reasonable assumption.  

\begin{table}[h!] 
\spacingset{1}
\caption{Principal strata defined by $(Z,S,A)$ and strata excluded by assumption. Strata excluded by monotonicity are shaded in gray. Primary stratum of interest is in row seven.} 
\label{tab1}
\begin{center}
\begin{tabular}{rrrr}
$S^{(z=0)}$ & $S^{(z=1)}$ & $A^{(z=0)}$ & $A^{(z=1)}$ \\ 
\hline
0 & 0 & $\cdot$ & $\cdot$ \\
0 & 1 & $\cdot$ & 0 \\
0 & 1 & $\cdot$ & 1 \\
\rowcolor{Gray} 1 & 0 & 0 & $\cdot$ \\
\rowcolor{Gray}  1 & 0 & 1 & $\cdot$ \\
1 & 1 & 0 & 0 \\
1 & 1 & 0 & 1 \\
\rowcolor{Gray} 1 & 1 & 1 & 0 \\
1 & 1 & 1 & 1 \\
\end{tabular}
\end{center}
\end{table}

Under Assumptions 1--6 the SCATE parameter $\psi$ is not identified. The following proposition states this, and is proved in the Appendix (with all other results).

\begin{proposition}
The survivor-complier effect $\psi$ is not identified under Assumptions 1--6, i.e., there exist data-generating processes with $\psi_1 \neq \psi_2$ that imply the same observed probability distribution $\Pb$ for $(\bX,Z,S,A,Y)$.
\end{proposition}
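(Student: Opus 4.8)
The plan is to prove non-identification by exhibiting two data-generating processes (DGPs) that both satisfy Assumptions 1--6 and induce exactly the same observed law $\Pb$ for $(\bX,Z,S,A,Y)$, yet assign different values $\psi_1\neq\psi_2$ to the SCATE. Since only existence is required, it suffices to work with a degenerate covariate $\bX$ (a point mass), so that covariate adjustment is vacuous and the construction reduces to specifying, for a single subpopulation, the six principal-stratum probabilities left after monotonicity (the unshaded rows of Table~\ref{tab1}) together with the relevant within-stratum potential-outcome means.

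First I would record how the latent strata map onto the observed cells. Labelling the six surviving strata as never-selected, the two selection-compliers (those with $S^{(z=0)}=0<S^{(z=1)}=1$, split by whether $A^{(z=1)}$ equals $0$ or $1$), the survivor never-taker, the survivor-complier (row seven, the target), and the survivor always-taker, a direct bookkeeping shows that each observed cell $\{Z=z,S=s,A=a\}$ is a mixture of a specific subset of these strata. The key observation is that the treated-selected cell $\{Z=1,S=1,A=1\}$ pools three strata --- the survivor-complier, the survivor always-taker, and the selection-complier who would be treated --- whereas the corresponding $Z=0$ cell sees only the survivor always-taker. Consequently the observed data pin down combinations such as the always-taker proportion and mean, and the summed (survivor-complier $+$ selection-complier-treated) proportion, but cannot separate the survivor-complier share from the selection-complier-treated share, nor their outcome means. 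This is precisely the slack that makes $\psi$ non-identified.

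Next I would exploit exactly this slack. Keeping all stratum probabilities and all other potential-outcome means fixed, I would perturb only the two quantities that enter the treated-selected cell through distinct strata: increase the survivor-complier treated mean $\E(Y^{(a=1)}\mid\text{row 7})$ by a small $\eta$ and simultaneously lower the selection-complier treated mean $\E(Y^{(z=1,a=1)}\mid\text{row 3})$ by $(\pi_7/\pi_3)\,\eta$, where $\pi_7,\pi_3$ are the two stratum probabilities. By construction the mixed mean observed in $\{Z=1,S=1,A=1\}$ is unchanged, every other observed cell is untouched, and the SCATE moves by exactly $\eta$. Choosing the baseline so that $\pi_3,\pi_7>0$ and all means lie strictly inside $(0,1)$, and taking $\eta$ small, yields two valid binary-outcome DGPs with $\psi_2=\psi_1+\eta$.

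Finally I would verify that both DGPs satisfy Assumptions 1--6: consistency and monotonicity hold by construction (only the six allowed strata appear and potential outcomes are defined consistently); positivity holds since $\Pb(Z=1\mid\bX)$ is a fixed interior constant; instrumentation holds because the survivor-complier probability $\pi_7$ is kept strictly positive; unconfoundedness holds because $Z$ is drawn independently of the latent stratum and outcome variables; and exclusion holds because within each always-selected stratum the $Z=0$ and $Z=1$ outcome means are set equal, while the selection-complier in row three is never selected under $Z=0$, so no exclusion constraint binds it. The conceptually delicate step is the first one --- correctly identifying which latent degrees of freedom survive in the observed law, i.e.\ recognizing that the survivor-complier stratum is confounded with the selection-complier-treated stratum in the only cell where its treated outcome is visible; once that mixture structure is in hand, the perturbation and the verification are routine.
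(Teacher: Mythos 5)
Your proof is correct and rests on essentially the same mechanism as the paper's: both exhibit two data-generating processes agreeing on the observed law by trading off the treated-outcome mean of the survivor-compliers (row 7) against that of the selection-compliers who take treatment (row 3), these two strata being pooled inseparably in the single cell $\{Z=1,S=1,A=1\}$ while exclusion leaves row 3 unconstrained. The paper merely instantiates this as a minimal two-stratum example with $Y^{(z=0)}=0$ almost surely and the extreme choices $(\psi,\xi)=(1,0)$ versus $(0,1)$, whereas you phrase it as a small perturbation $\eta$ of a generic six-stratum baseline; the underlying idea is identical.
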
 

In other words, because the same observed data distribution $\Pb$ can be consistent with different values of $\psi$, even knowing $\Pb$ without error (e.g., with infinite sample size) is not enough to learn the true value of $\psi$ with certainty. As pointed out for example by \textcite{yang2016using, ertefaie2016selection} in related identification problems, this is essentially a result of the fact that the conditional distribution of $Y$ given $(Z,S,A)$ is a mixture of potential outcome distributions from different principal strata. 

Although the SCATE $\psi$ is not point-identified, we show in the next section that it is still possible to construct informative bounds and sensitivity analyses. Before that, though, we first define a fundamental expression for the $\psi$, indicating how it relates to the proportion of survivor-compliers and an intention-to-treat effect. Again, proofs of all results are given in the Appendix.

\begin{proposition}
Under Assumptions 1, 3, 5, and 6, we can write the SCATE as
$$ \psi = (\beta / \alpha) \Pb(S^{(z=1)} = S^{(z=0)}=1) $$
where
$$ \alpha = \Pb(S^{(z=1)}=S^{(z=0)}=1, A^{(z=1)}>A^{(z=0)}) $$
is the survivor-complier proportion, and
$$ \beta = \E(Y^{(z=1)} - Y^{(z=0)} \mid S^{(z=1)} = S^{(z=0)}=1) $$
is the survivor intention-to-treat effect (i.e., effect of $Z$ among always-selected).
\end{proposition}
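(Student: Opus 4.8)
The plan is to recognize the claimed identity as a within-survivor-group version of the standard Wald/LATE decomposition. Writing $G = \{S^{(z=1)}=S^{(z=0)}=1\}$ for the always-selected (survivor) event, note that $\alpha/\Pb(G) = \Pb(A^{(z=1)}>A^{(z=0)} \mid G)$ is exactly the proportion of compliers among survivors, so the target formula $\psi=(\beta/\alpha)\Pb(G)$ is equivalent to the factorization $\beta = \psi \cdot \Pb(A^{(z=1)}>A^{(z=0)} \mid G)$, i.e. (survivor ITT) $=$ (SCATE) $\times$ (complier fraction among survivors). I would establish this factorization and then divide through, invoking Assumption 3 to guarantee $\alpha \geq \epsilon > 0$ so that the division is legitimate. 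This is purely an identity among potential-outcome functionals, so neither positivity (Assumption 2) nor unconfoundedness (Assumption 4) is needed.

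First I would compute $\beta = \E(Y^{(z=1)}-Y^{(z=0)} \mid G)$ by conditioning on principal compliance strata within $G$. By the monotonicity restriction (Assumption 6), within $G$ there are no defiers, so every survivor is an always-taker ($A^{(z=0)}=A^{(z=1)}=1$), a never-taker ($A^{(z=0)}=A^{(z=1)}=0$), or a complier ($A^{(z=0)}=0<1=A^{(z=1)}$); equivalently $A^{(z=1)}-A^{(z=0)}=\one\{A^{(z=1)}>A^{(z=0)}\}$ on $G$. Next I would apply the composition (nested) form of consistency (Assumption 1), $Y^{(z)}=Y^{(z,A^{(z)})}$ on $G$, to re-express each survivor's $z$-contrast $Y^{(z=1)}-Y^{(z=0)}$ in terms of joint $(z,a)$-potential outcomes: it equals $Y^{(z=1,1)}-Y^{(z=0,1)}$ for always-takers, $Y^{(z=1,0)}-Y^{(z=0,0)}$ for never-takers, and $Y^{(z=1,1)}-Y^{(z=0,0)}$ for compliers. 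Taking expectations over the three strata then gives a decomposition of $\beta$ into stratum-specific contributions weighted by stratum probabilities.

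The exclusion restriction (Assumption 5) is what collapses this decomposition. For always-takers and never-takers the $z$-contrast is a pure \emph{direct} effect of the instrument holding treatment fixed, so exclusion forces these contributions to vanish; for compliers, exclusion lets me replace $Y^{(z=1,1)}-Y^{(z=0,0)}$ by $Y^{(a=1)}-Y^{(a=0)}$. What survives is $\beta = \E\big(Y^{(a=1)}-Y^{(a=0)} \mid G,\, A^{(z=1)}>A^{(z=0)}\big)\,\Pb(A^{(z=1)}>A^{(z=0)} \mid G) = \psi \cdot \alpha/\Pb(G)$. Rearranging and multiplying by $\Pb(G)$ yields $\psi=(\beta/\alpha)\Pb(G)$, as claimed.

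The main obstacle is deploying the exclusion restriction at the stratum level. Assumption 5 is stated as a mean-zero net-direct-effect condition on the whole survivor group $G$, whereas the decomposition above multiplies the joint potential outcomes $Y^{(z,a)}$ by compliance-stratum indicators. I would therefore carry the argument out at the level of the composition identity itself—using $Y^{(z)}=Y^{(z,A^{(z)})}$ together with exclusion applied within each stratum (equivalently, its individual-level counterpart $Y^{(z,a)}=Y^{(a)}$ in the survivor population)—so that the indicator-weighted terms are handled correctly rather than only their $G$-level averages. Verifying that the always-taker and never-taker terms genuinely cancel and that no cross terms survive is the one place where care beyond routine algebra is required; everything else is bookkeeping over the three strata.
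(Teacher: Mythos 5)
Your proposal is correct and takes essentially the same route as the paper's proof: apply consistency in its composition form $Y^{(z)}=Y^{(z,A^{(z)})}$ on $G$, use exclusion to drop the $z$-argument, use monotonicity to reduce $Y^{(A^{(z=1)})}-Y^{(A^{(z=0)})}$ to $(Y^{(a=1)}-Y^{(a=0)})\one\{A^{(z=1)}>A^{(z=0)}\}$ (your three-stratum bookkeeping is just this identity spelled out), then factor by iterated expectation and divide using Assumption 3 to ensure $\alpha>0$. Your closing observation---that Assumption 5 as stated is a $G$-level mean condition, while the argument needs it to hold when weighted by compliance-stratum indicators (effectively its individual-level or stratum-wise counterpart)---is a genuine subtlety that the paper's own proof passes over silently in exactly the same step, so flagging and resolving it as you do is a point in your favor rather than a deviation.
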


The parameter $\alpha$ is the survivor-complier proportion, which is a measure of the strength of the instrument, and is also important since it measures the size of the non-identified population in which the effect is being estimated. In our application, this is the proportion of patients that wait less than four hours to gain access to the ICU because they were encouraged by bed availability in the ICU. The intention-to-treat effect, $\beta$, is the change in mortality caused by bed availability among the sub-population that would always be admitted to the ICU in less than four hours. The term $\beta$, in addition to providing an upper-bound on complier effects, does not require monotonicity of treatment (though in our setting its bounds will require monotonicity of selection). The quantities $(\alpha,\beta)$ may be of interest in their own right, but they also play a crucial role in constructing bounds and sensitivity analyses for the SCATE $\psi$, since they are the unidentified parts of $\psi$.  

\section{Bounds, Sensitivity Analysis, \& Estimation}
\label{sec:methods}

Next, we derive bounds on the SCATE $\psi$, and we use these bounds to construct optimally informative and interpretable sensitivity analysis parameters; we then detail flexible yet efficient estimation methods.

\subsection{Bounds}

Although the SCATE $\psi$ is not point-identified, we show here that it is possible to construct potentially informative bounds. Bounds are often used to derive inferences under weaker conditions than those needed for point identification \autocite{GriMea2008,schwartz2012sensitivity,zhang2008evaluating}. Importantly, our analysis exploits covariate information to sharpen bounds as much as possible; then we use these worst-case bounds to aid a more informative sensitivity analysis. 

For notational simplicity, we denote
\begin{align*}
\pi_z(\bx) &= \Pb(Z=z \mid \bX=\bx), \\
\theta_z(a \mid \bx) &= \Pb(A=a \mid \bX=\bx,Z=z), \\
\lambda_z(\bx) &= \E(S \mid \bX=\bx,Z=z), \\
\mu_z(\bx) &= \E(YS \mid \bX=\bx,Z=z) 
\end{align*}
as important nuisance functions that will be used throughout. Note that since $A \in \{0,1,\cdot\}$, we have $1-\theta_z(0 \mid \bx)=\theta_z(1 \mid \bx) + \theta_z(\cdot \mid \bx)$ where $A=\cdot$ if $S=0$. Also, as pointed out earlier, we focus on the case where $Y$ is binary, although our results equally apply to arbitrary outcomes as long as they are bounded.

In the next theorem we provide bounds $(\alpha_\ell,\alpha_u)$ on the survivor-complier proportion $\alpha$ and bounds $(\beta_\ell,\beta_u)$ on the survivor intention-to-treat effect $\beta$. We will use these bounds to construct bounds and a sensitivity analysis for the SCATE $\psi$.

\begin{theorem}
Under Assumptions 1--6, the survivor-complier proportion is bounded as 
$$ \E[ \{ \theta_0(0 \mid \bX) - \theta_1(0 \mid \bX) \}_+] \leq \alpha \leq \E\{ \theta_1(1 \mid \bX) - \theta_0(1 \mid \bX) \} , $$
and the survivor intention-to-treat effect is bounded as 
$$ \frac{\E[ \{ \mu_1(\bX) + \lambda_0(\bX) -  \lambda_1(\bX) \}_+ - \mu_0(\bX)] }{ \E\{ \lambda_0(\bX)\} } \leq \beta \leq \frac{ \E[ \{ \mu_1(\bX)  \wedge  \lambda_0(\bX) \} - \mu_0(\bX) ] } { \E\{ \lambda_0(\bX)\} } . $$
\end{theorem}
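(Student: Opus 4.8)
The plan is to work conditionally on $\bX$, re-express each nuisance function in terms of the six principal-strata probabilities that survive monotonicity (rows $1,2,3,6,7,9$ of Table~\ref{tab1}), and then bound the target quantities pointwise in $\bx$ before integrating. First I would use consistency and unconfoundedness to write, for each $z$, $\theta_z(a\mid\bX)=\Pb(S^{(z)}=1,A^{(z)}=a\mid\bX)$, $\lambda_z(\bX)=\Pb(S^{(z)}=1\mid\bX)$, and $\mu_z(\bX)=\E\{Y^{(z)}\one(S^{(z)}=1)\mid\bX\}$, so that every observed-data functional becomes a sum of strata contributions. Monotonicity (both parts of Assumption~6) then kills the three shaded rows, and in particular gives $\one(S^{(z=1)}=S^{(z=0)}=1)=\one(S^{(0)}=1)$ and $\Pb(S^{(0)}=0,S^{(1)}=1\mid\bX)=\lambda_1(\bX)-\lambda_0(\bX)$.

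For the $\alpha$ bounds, reading off the strata decompositions yields the two identities $\theta_1(1\mid\bX)-\theta_0(1\mid\bX)=\alpha(\bX)+\Pb(\text{sel-complier},A^{(1)}=1\mid\bX)$ and $\theta_0(0\mid\bX)-\theta_1(0\mid\bX)=\alpha(\bX)-\Pb(\text{sel-complier},A^{(1)}=0\mid\bX)$, where $\alpha(\bX)$ is the conditional survivor-complier probability (row $7$). Since the extra terms are nonnegative, the first identity gives $\alpha(\bX)\le\theta_1(1\mid\bX)-\theta_0(1\mid\bX)$ and the second gives $\theta_0(0\mid\bX)-\theta_1(0\mid\bX)\le\alpha(\bX)$; because $\alpha(\bX)\ge0$ the left-hand side of the latter may be replaced by its positive part. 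Taking $\E(\cdot)$ over $\bX$ and using $\alpha=\E\{\alpha(\bX)\}$ delivers both displayed bounds.

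For the $\beta$ bounds, the key observation is that the denominator is fully identified: $\E\{\lambda_0(\bX)\}=\Pb(S^{(0)}=1)=\Pb(S^{(z=1)}=S^{(z=0)}=1)=:\gamma$, so only the numerator $\beta\gamma=\E\{Y^{(1)}\one(S^{(0)}=1)\}-\E\{Y^{(0)}\one(S^{(0)}=1)\}$ is partially identified. The second term equals $\E\{\mu_0(\bX)\}$ exactly. For the first, write $M_1(\bX):=\E\{Y^{(1)}\one(\text{a.s.})\mid\bX\}$; splitting $\{S^{(1)}=1\}$ into always-selected and selection-compliers gives $M_1(\bX)=\mu_1(\bX)-\E\{Y^{(1)}\one(\text{sel-complier})\mid\bX\}$, where the unidentified subtracted term lies in $[0,\lambda_1(\bX)-\lambda_0(\bX)]$ because $Y\in[0,1]$. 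This is the crux: $M_1(\bX)$ is simultaneously constrained by $0\le M_1(\bX)\le\lambda_0(\bX)$ (outcome range times the always-selected probability) and by $\mu_1(\bX)-\{\lambda_1(\bX)-\lambda_0(\bX)\}\le M_1(\bX)\le\mu_1(\bX)$. Intersecting the two intervals produces exactly $\{\mu_1(\bX)+\lambda_0(\bX)-\lambda_1(\bX)\}_+\le M_1(\bX)\le\mu_1(\bX)\wedge\lambda_0(\bX)$, and one checks the interval is nonempty using $\lambda_0\le\lambda_1$ and $\mu_1\le\lambda_1$. Integrating these pointwise bounds, subtracting $\E\{\mu_0(\bX)\}$, and dividing by the fixed constant $\gamma>0$ (nonzero by instrumentation) gives the displayed bounds on $\beta$.

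The main obstacle is this $\beta$ step: one must recognize that $M_1(\bX)$ is pinned between two independent sources of information—the trivial outcome-range constraint and the selection-complier mixture constraint—and that taking their intersection is precisely what yields the positive-part and minimum operators, rather than a single naive bound. I would also note that the exclusion restriction (Assumption~5) plays no role here, since both $\alpha$ and $\beta$ are expressed purely through instrument-potential outcomes and strata membership; it enters only later, when $\beta$ is converted into the SCATE $\psi$. Finally, the theorem asserts only the inequalities; establishing sharpness would additionally require exhibiting, for each bound, a data-generating process consistent with $\Pb$ that attains it, obtained by pushing the unidentified selection-complier outcome contribution to its extreme feasible value.
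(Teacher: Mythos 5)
Your proof is correct for the theorem as literally stated, but it reaches the $\beta$ bounds by a genuinely different (more elementary) route than the paper. For the $\alpha$ bounds the two arguments are essentially the same computation in different clothing: the paper manipulates events, e.g.\ $\{A^{(1)}>A^{(0)}\}\cup\{A^{(1)}>S^{(0)}\}$ together with monotonicity, where you enumerate the six surviving principal strata; your identities $\theta_1(1\mid\bX)-\theta_0(1\mid\bX)=\alpha(\bX)+\Pb(\text{sel-complier},A^{(1)}=1\mid\bX)$ and $\theta_0(0\mid\bX)-\theta_1(0\mid\bX)=\alpha(\bX)-\Pb(\text{sel-complier},A^{(1)}=0\mid\bX)$ are exactly what the paper's set algebra encodes. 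The real divergence is in the $\beta$ half: the paper proves a standalone mixture lemma at the distribution level (Lemma 1 in the Appendix), bounding the mean of an unknown component $F$ of a known mixture $H=pF+qG$ via the bounding distributions $F_\ell^*=(H/p)\wedge 1$ and $F_u^*=\{(H-q)/p\}\vee 0$, then applies it with $H$ the conditional law of $Y^{(1)}$ given $S^{(1)}=1$ and $p=\lambda_0(\bX)/\lambda_1(\bX)$; you instead work directly with the conditional mean $M_1(\bX)=\E\{Y^{(1)}\one(S^{(0)}=1)\mid\bX\}$ and intersect the outcome-range constraint $0\le M_1\le\lambda_0$ with the mixture-decomposition constraint $\mu_1-(\lambda_1-\lambda_0)\le M_1\le\mu_1$. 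For bounded outcomes these coincide---your intersection is precisely the mean-level specialization of the lemma---so your argument is shorter and makes more transparent where the positive-part and minimum operators come from. What the paper's route buys is sharpness: the explicit bounding distributions show the bounds are attainable, which the displayed inequalities do not assert but which the paper leans on elsewhere (the bounds are advertised as sharp throughout, and the proof of Proposition 1 explicitly invokes sharpness of Theorem 1); you correctly flag that your argument defers this to a separate construction, so nothing is wrong, only weaker in that one respect. Two minor points: cite positivity alongside consistency and unconfoundedness wherever you identify $\theta_z$, $\lambda_z$, $\mu_z$ from the observed data (conditioning on $Z=z$ requires it, and the paper does so at each such step), and your observation that the exclusion restriction is never used here matches the paper, where Assumption 5 enters only through Proposition 2 when $\beta$ is converted into $\psi$.
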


The intuition behind the bound on $\alpha$ is as follows. Under Assumptions 1--6, the quantity $\theta_1(1 \mid \bX)-\theta_0(1 \mid \bX) = \Pb(A^{(z=1)}=1 \mid \bX) - \Pb(A^{(z=0)}=1 \mid \bX)$ is the difference in proportions of units exposed to the treatment (less than a 4 hour wait) when encouraged versus not among units with covariates $\bX$. This equals the proportion with covariates $\bX$ who either are survivor-compliers or are selection-compliers ($S^{(z=1)}>S^{(z=0)}$) that take treatment when encouraged, which is the upper bound on the proportion of survivor-compliers. Averaging the covariate-specific upper bound gives an upper bound on the marginal proportion of survivor-compliers. Similar logic shows that $\theta_0(0 \mid \bX)-\theta_1(0 \mid \bX) = \Pb(A^{(z=1)} \neq 0 \mid \bX) - \Pb(A^{(z=0)} \neq 0 \mid \bX)$ equals the covariate-specific proportion of survivor-compliers minus the proportion of selection-compliers that take control when encouraged, which is clearly a lower bound. Note that when there is no selection (i.e., $S=1$ with probability one), the lower and upper bounds collapse to $\E(A \mid \bX,Z=1)-\E(A \mid \bX,Z=0)$, which is equivalent to the denominator of the usual Wald estimator 

The bounds on $\beta$ are a covariate-adjusted version of those previously given by \textcite{hudgens2003analysis, zhang2003estimation}, since $\beta$ is the survivor average causal effect of encouragement by the IV. As with $\alpha$, when there is no selection, the bounds on $\beta$ collapse to the numerator of the usual Wald estimand given by $\E(Y \mid \bX,Z=1)-\E(Y \mid \bX,Z=0)$, since in that case we have $\lambda_z(\bX)=1$. In fact the lower bound $\beta_\ell$ can collapse to a scaled version of the Wald numerator even in the presence of selection, whenever there is no differential selection for those encouraged or not, i.e., when $\Pb\{\lambda_1(\bX)=\lambda_0(\bX)\}=1$ or equivalently $S \ind Z \mid \bX$.

Using the bounds for $\alpha$ and  $\beta$, we can construct bounds on the SCATE $\psi$, since the other term in the numerator is identified. These bounds are given in the next theorem.

\begin{theorem}
Let $(\alpha_\ell,\alpha_u)$ and $(\beta_\ell,\beta_u)$ denote the lower and upper bounds for $\alpha$ and $\beta$, respectively, given in Theorem 1. Under Assumptions 1--6, the SCATE is bounded as
$$ \frac{ \beta_\ell \ \E\{\lambda_0(\bX)\} }{ (\alpha_u-\alpha_\ell) \one(\beta_\ell > 0) + \alpha_\ell } \leq \psi \leq \frac{ \beta_u \ \E\{\lambda_0(\bX)\} }{ (\alpha_\ell -\alpha_u) \one(\beta_u > 0) + \alpha_u } . $$
\end{theorem}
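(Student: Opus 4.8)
The plan is to reduce the statement to an elementary optimization of the ratio $\beta/\alpha$ over the rectangle formed by the interval bounds of Theorem 1, after using Proposition 2 to express $\psi$ in terms of $\alpha$, $\beta$, and an identified positive scale factor.

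First I would show that the factor $\Pb(S^{(z=1)}=S^{(z=0)}=1)$ appearing in Proposition 2 equals $\E\{\lambda_0(\bX)\}$ and is identified and strictly positive. By the selection-monotonicity part of Assumption 6, $\{S^{(z=0)}=1\}\subseteq\{S^{(z=1)}=1\}$, so the always-selected event coincides with $\{S^{(z=0)}=1\}$; then unconfoundedness (Assumption 4) and consistency (Assumption 1) give $\Pb(S^{(z=1)}=S^{(z=0)}=1)=\E\{\Pb(S=1\mid\bX,Z=0)\}=\E\{\lambda_0(\bX)\}$. Instrumentation (Assumption 3) forces this quantity to be at least $\epsilon>0$, so it is a fixed positive constant. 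Consequently Proposition 2 reads $\psi=(\beta/\alpha)\,\E\{\lambda_0(\bX)\}$, and since $\E\{\lambda_0(\bX)\}$ is identified and positive, bounding $\psi$ reduces to bounding $g(\beta,\alpha)=\beta/\alpha$ and multiplying by this constant.

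Next I would treat $(\beta,\alpha)$ as an unknown point confined to the product set $[\beta_\ell,\beta_u]\times[\alpha_\ell,\alpha_u]$ guaranteed by Theorem 1, with $\alpha>0$, and maximize and minimize $g$ over this rectangle. The key observations are that $g$ is strictly increasing in $\beta$ (so each extremum is attained at an endpoint of the $\beta$-interval) and that, for fixed $\beta$, the map $\alpha\mapsto\beta/\alpha$ is monotone with the \emph{direction of monotonicity governed by the sign of $\beta$}: decreasing in $\alpha$ when $\beta>0$ and increasing when $\beta<0$. Writing $\sup_{(\beta,\alpha)}g=\sup_\alpha\sup_\beta g$, the supremum fixes $\beta=\beta_u$ and then selects $\alpha=\alpha_\ell$ if $\beta_u>0$ and $\alpha=\alpha_u$ if $\beta_u\le 0$; symmetrically the infimum fixes $\beta=\beta_\ell$ and selects $\alpha=\alpha_u$ if $\beta_\ell>0$ and $\alpha=\alpha_\ell$ if $\beta_\ell\le 0$. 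Collecting the two cases with the indicator $\one(\cdot>0)$ — so that the denominator becomes $\alpha_u+(\alpha_\ell-\alpha_u)\one(\beta_u>0)$ for the upper bound and $\alpha_\ell+(\alpha_u-\alpha_\ell)\one(\beta_\ell>0)$ for the lower bound — and multiplying by $\E\{\lambda_0(\bX)\}$ reproduces the stated expressions exactly (the weak inequality in the indicator also covers the boundary cases $\beta_u=0$ or $\beta_\ell=0$, where $g\equiv 0$ over the relevant slice).

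The main obstacle I anticipate is the sign-dependent direction of monotonicity in $\alpha$: because $g$ is not monotone in $\alpha$ uniformly over $\beta$, the optimizing corner of the rectangle switches, and one must verify the case split cleanly and confirm it is encoded correctly by $\one(\beta_u>0)$ and $\one(\beta_\ell>0)$. A secondary point requiring care is positivity: $\alpha\ge\epsilon>0$ keeps $g$ well-defined, and $\E\{\lambda_0(\bX)\}>0$ ensures multiplication by this scale preserves inequality directions; in the degenerate case $\alpha_\ell=0$ the corresponding bound is simply $+\infty$ and remains vacuously valid. I would also remark that, since Theorem 1 controls $\alpha$ and $\beta$ only through their marginal intervals, this argument delivers a valid enclosure of $\psi$; its tightness hinges on the joint attainability of $(\alpha,\beta)$ over the rectangle, which is not needed for the stated inequality.
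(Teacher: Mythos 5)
Your proposal is correct and follows essentially the same route as the paper: both express $\psi=(\beta/\alpha)\,\E\{\lambda_0(\bX)\}$ via Proposition 2 (identifying $\Pb(S^{(z=1)}=S^{(z=0)}=1)=\E\{\lambda_0(\bX)\}>0$ through monotonicity, consistency, positivity, and unconfoundedness), and then perform the same sign-of-$\beta$ case split to select the appropriate endpoint of $[\alpha_\ell,\alpha_u]$, which your rectangle-optimization phrasing encodes exactly as the paper's three explicit cases. Your added care about the degenerate case $\alpha_\ell=0$ (where the paper implicitly assumes $\alpha_\ell>0$) and your remark that validity does not require joint attainability are minor refinements, not a different argument.
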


 As with the bounds on $\alpha$ and $\beta$, the bounds $(\psi_\ell,\psi_u)$ on the SCATE collapse to the usual Wald estimator when there is no selection. However, even in the presence of selection, the bounds operate in a fashion analogous to the Wald estimator. In each case, we have a ratio estimator based on the bounds for $\alpha$ and $\beta$. The numerators are bounds on $\beta$, which is the analog of the intention-to-treat effect, and the denominators are bounds on $\alpha/\Pb(S^{(z=1)}=S^{(z=0)}=1)$, which is the proportion of compliers among the survivors. The correction factor $(\alpha_\ell -\alpha_u)$ picks out the appropriate bound on $\alpha$ depending on the sign for the bound on $\beta$. When $S \ind Z \mid \bX$, the width of the bounds is completely determined by the non-identification of the proportion of survivor-compliers $\alpha$.
 
 While the width of the bounds depends on the use of covariates, not all covariates are equally useful in narrowing the bounds. Covariates that are strong predictors of selection $S$ are especially important. In the extreme case where $S=g(\bX)$ for some $g$ (i.e., covariates can perfectly predict selection), then the bounds collapse to a single point. A proof of this fact is given in the Appendix. In intermediate cases, where the covariates for $S$ are good but not perfect predictors, the bounds can still be much tighter than those not incorporating covariates; this is illustrated via simulations in Section 4. These results show how covariates can be critical to improving the bounds. They should also help guide data collection in that researchers should focus on covariates that are strong predictors of selection.
 
 Apart from the cases mentioned above where covariates are perfectly predictive, the only way to achieve point identification of the SCATE $\psi$ is to add stronger assumptions. For example, if one assumed that the third stratum in Table 1 was empty, i.e., that $\Pb(S^{(z=1)}>S^{(z=0)},A^{(z=1)}=1 \mid \bX)=0$, then $\alpha$ would be identified with its upper bound $\alpha=\alpha_u$. Conversely, if one assumed the second stratum in Table 1 was empty, i.e., that $\Pb(S^{(z=1)}>S^{(z=0)},A^{(z=1)}=0 \mid \bX)=0$, then $\alpha$ would be identified with its lower bound $\alpha=\alpha_\ell$. Since $\beta$ is a survivor average effect of IV encouragement, it could be identified by adding assumptions considered by previous authors, such as parametric modeling assumptions \autocite{zhang2009likelihood}, structural assumptions about pre-treatment covariates \autocite{ding2011identifiability}, or no unmeasured confounding assumptions given baseline or post-IV covariates \autocite{tchetgen2014identification}. In our view, additional identifying assumptions of this type are likely implausible in the ICU application.

\subsection{Sensitivity Analysis}

In some cases, the bounds given in Theorem 2 will be sufficiently narrow so as to be informative and practically meaningful. In other cases, however, they may be wide and thus unsatisfactory in that they only give a worst-case snapshot of possible SCATE values. In the latter case, we propose that it is more informative to display all possible values of $\psi$ as a function of interpretable sensitivity parameters, not just the two most extreme worst-case possibilities. This is in the same spirit as \textcite{IchMeaNan2008}, for example, since the goal is to use the data to inform more specific inferences about $\psi$. 

In our setting, there are two natural sensitivity parameters: the survivor-complier proportion $\alpha$ and the survivor intention-to-treat effect $\beta$. We can vary these sensitivity parameters by interpolating between their upper and lower bounds given in Theorem 1. Therefore we consider the parameter
$$ \psi^*(\bdelta) = \left\{ \frac{\delta_2 \beta_u + (1-\delta_2) \beta_\ell }{ \delta_1 \alpha_u + (1-\delta_1) \alpha_\ell } \right\} \E\{\lambda_0(\bX)\} $$
for $(\delta_1,\delta_2) = \bdelta \in [0,1]^2$. Note that larger values of $\delta_1$ correspond to larger values of $\alpha$, and larger values of $\delta_2$ correspond to larger values of $\beta$, with $\delta_1=1$ recovering $\alpha_u$ and $\delta_1=0$ recovering $\alpha_\ell$, and similarly for $\beta$. And of course we have $\psi_\ell \leq \psi^*(\bdelta) \leq \psi_u$ for any $\bdelta$, where $(\psi_\ell, \psi_u)$ are the bounds on the SCATE given in Theorem 2. However, consideration of $\psi^*(\bdelta)$ allows for more nuance than just $(\psi_\ell,\psi_u)$, since $\psi^*(\bdelta)$ allows us to incorporate prior knowledge about the sizes of the survivor-complier proportion $\alpha$ or the intention-to-treat effect $\beta$. 

Although we index $\psi^*(\bdelta)$ with the parameter $\bdelta$ to simplify notation and theoretical analysis, in practice (e.g., when plotting results) it will be more useful to index estimates by the implied values of $(\alpha, \beta)$, i.e., by $\alpha^* = \delta_1 \alpha_u + (1-\delta_1) \alpha_\ell$ and $\beta^*=\delta_2 \beta_u + (1-\delta_2) \beta_\ell$.  More specifically, we index $\psi^*(\bdelta)$ by $\bdelta$ to reflect the fact that the implied values $(\alpha^*,\beta^*)$ are not arbitrarily chosen, but instead depend on the bounds $(\alpha_\ell,\alpha_u)$ and $(\beta_\ell,\beta_u)$ implied by $\Pb$. This will be important when accounting for uncertainty in estimating $\psi^*(\bdelta)$ due to the fact that we have estimated the bounds for $(\alpha,\beta)$.

Note that what we propose is different from a standard sensitivity analysis, which typically relies on fixed arbitrary values of sensitivity parameters. In contrast, our sensitivity parameters are informed by the data. In other words, we do not need to consider the entire range of values $\alpha^* \in (0,1]$ and $\beta \in [-1,1]$, instead we can consider the narrower range of values consistent with the observed data. This will typically lead to a more meaningful and interpretable sensitivity analysis.

\subsection{Estimation \& Inference}

Next, we construct efficient nonparametric estimators for the parameter $\psi^*(\bdelta)$ and detail their asymptotic properties, showing in particular how to construct confidence intervals. Our approach to estimation makes use of semi- and non-parametric efficiency theory and influence functions \autocite{bickel1993efficient,van2003unified,tsiatis2006semiparametric}. Influence functions are essential for a number of reasons. First, the influence function with the smallest variance (called the efficient influence function) determines the efficiency bound for estimating a parameter, thus giving a benchmark against which estimators can be measured and indicating the difficulty of the problem. Second, influence functions can be used to construct estimators with very favorable properties, such as double robustness or general second-order bias. For this reason, influence function-based estimators can attain fast parametric rates of convergence, even while allowing flexible nonparametric estimation of complex high-dimensional nuisance functions. 

First, we outline additional notation. We let $\Pn$ denote the empirical measure so that for example sample averages can be written as $n^{-1} \sum_i f(\bO_i)= \Pn(f) = \Pn\{f(\bX)\}$. Further, for any arbitrary variable $T \subset \bO$, we let
$$ \phi_z(T) = \frac{\one(Z=z)}{\pi_z(\bX)}\Big\{ T - \E(T \mid \bX, Z=z) \Big\} + \E(T \mid \bX, Z=z) $$
denote the uncentered component of the efficient influence function for the parameter $\E\{\phi_z(T)\} = \E\{\E(T \mid \bX, Z=z)\}$. Note that, in addition to $z$ and $T$, the quantity $\phi_z(T)$ also depends on the variables $(\bX,Z)$ as well as the nuisance functions $\pi_z(\bx)$ and $\E(T \mid \bX=\bx,Z=z)$; for now we keep this dependence implicit for notational simplicity. Similarly we let 
$$ \hat\phi_z(T) = \frac{\one(Z=z)}{\hat\pi_z(\bX)}\Big\{ T - \hat\E(T \mid \bX, Z=z) \Big\} + \hat\E(T \mid \bX, Z=z) $$
denote the estimated version of $\phi_z(T)$, based on nuisance estimators $\hat\pi_z(\bX)$ and $\hat\E(T \mid \bX, Z=z)$. Note that we are largely agnostic about which particular methods are used to construct these nuisance estimators, leaving this up to the analyst. Our theory allows for either parametric or nonparametric estimators, with asymptotic normality and root-$n$ rates requiring certain convergence rate conditions, as detailed in Theorem 3. The functions $\phi_z$ and $\hat\phi_z$ will play an important role in constructing efficient estimators, analyzing their asymptotic behavior, and computing confidence intervals.

Before discussing our proposed estimators, we need to point out two regularity conditions required for such estimators to be well-behaved. These conditions are required since the bounds $(\alpha_\ell, \beta_\ell, \beta_u)$ from the previous subsection are nonsmooth, i.e., $\alpha_\ell$ and $\beta_\ell$ involve maxima, and $\beta_u$ involves a minimum. In particular, without such conditions the parameter $\psi^*(\bdelta)$ would not admit an influence function, and regular estimators would not exist. Informally, regular estimators are nicely behaved estimators whose asymptotic distribution is invariant to small shifts in the data-generating process; for more details see \textcite{van2000asymptotic, tsiatis2006semiparametric, hirano2012impossibility}. The two regularity conditions we need are as follows.

\begin{condition}
$\Pb\{ \theta_1(0 \mid \bX) = \theta_0(0 \mid \bX) \} = 0$.
\end{condition}
\begin{condition}
$\Pb[ \mu_1(\bX) \in\{ \lambda_0(\bX), \lambda_0(\bX)-\lambda_1(\bX)\} ] = 0$.
\end{condition}
Conditions 1--2 are similar in spirit to so-called non-exceptional law conditions in optimal treatment regime estimation \autocite{robins2004optimal,van2014targeted}. However, our conditions may be more plausible since the quantities they restrict are not conditional effects of the treatment $A$, which might reasonably be zero in practice.

Since under Assumptions 1--6 we have $\theta_z(0 \mid \bX)=\Pb(A^{(z)}=0 \mid \bX)$, Condition 1 says that there are almost no covariate strata in which the proportion of survivor-compliers equals the proportion of units in the second stratum in Table 1 (i.e., compliers with respect to selection, who take control when selected). In the ICU application this means that, for almost all covariate strata, the proportion of patients who would always be admitted to the ICU regardless of bed availability, but whose wait time would depend on availability (i.e., survivor-compliers), cannot equal the proportion of patients who would only be admitted to the ICU with more beds available, and who would wait more than 4 hours if admitted. This appears to be a plausible assumption in the ICU application; it would be surprising if these particular strata happened to be exactly the same size. 

Condition 2 is similar, though perhaps slightly less intuitive. Since under Assumptions 1--6 we have $\mu_z(\bX)=\E(Y^{(z)}=1 \mid \bX, S^{(z)}=1)\lambda_z(\bX)$ and $\lambda_z(\bX)=\Pb(S^{(z)}=1 \mid \bX)$, Condition 2 says that there are almost no covariate strata in which the mean outcome under treatment for those selected under encouragement from the IV equals either (a) the ratio $\Pb(S^{(z=0)}=1 \mid \bX)/\Pb(S^{(z=1)}=1 \mid \bX)$ of those who would be selected under no encouragement versus encouragement, or (b) this ratio minus one. Note that we must have $\lambda_1(\bX) \geq \lambda_0(\bX)$ under monotonicity, so the only way the latter part of Condition 2 could be violated is if for some non-negligible strata there are only always- and never-takers with respect to selection, and in these exact strata the outcome for the always-takers happens to be always zero. As with Condition 1, this is a somewhat contrived scenario that we would generally not expect to encounter in practice.

Now we are ready to present our proposed estimator for $\psi^*(\bdelta)$, which is based on estimates of the bounds on $\alpha$ and $\beta$. Specifically we let
\begin{align*}
\hat\alpha_\ell &= \Pn\left( \one\Big\{ \hat\theta_0(0 \mid \bX)> \hat\theta_1(0 \mid \bX) \Big\} \Big[ \hat\phi_1\{\one(A \neq 0)\} - \hat\phi_0\{\one(A \neq 0)\} \Big] \right) \\
\hat\alpha_u &= \Pn\left[ \hat\phi_1\Big\{\one(A =1)\Big\} - \hat\phi_0\Big\{\one(A = 1)\Big\} \right] \\
\hat\beta_\ell &= \Pn\left( \one\left\{ \frac{\hat\mu_1(\bX)}{\hat\lambda_1(\bX)-\hat\lambda_0(\bX)} > 1 \right\} \Big[ \hat\phi_1\{S(Y-1)\} - \hat\phi_0(S) \Big] - \hat\phi_0(SY) \right) \Big/ \Pn\Big\{ \hat\phi_0(S) \Big\}  \\
\hat\beta_u &= \Pn\left[ \one\left\{  \frac{\hat\mu_1(\bX)}{\hat\lambda_0(\bX)} > 1 \right\} \Big\{ \hat\phi_0(S) - \hat\phi_1(SY) \Big\} + \hat\phi_1(SY) - \hat\phi_0(SY) \right] \Big/ \Pn\Big\{ \hat\phi_0(S) \Big\} 
\end{align*}
denote efficient influence-function-based estimators of the bounds on $(\alpha,\beta)$, and then the corresponding estimator of $\psi^*(\bdelta)$ is given by
$$ \widehat\psi^*(\bdelta) = \left\{ \frac{\delta_2 \hat\beta_u + (1-\delta_2) \hat\beta_\ell }{ \delta_1 \hat\alpha_u + (1-\delta_1) \hat\alpha_\ell } \right\} \Pn\Big\{ \hat\phi_0(S) \Big\} . $$
In the Appendix we show that this estimator solves an estimating equation based on the efficient influence function for $\psi^*(\bdelta)$; this is a standard technique for constructing efficient estimators. 

The next theorem gives conditions under which our proposed estimator is asymptotically normal and efficient in a nonparametric model, which puts no constraints on the probability distribution $\Pb$ (other than, at most, nonparametric smoothness, sparsity, or other structural constraints). To ease notation we let $\theta_{za}=\theta_z(a \mid \bx)$ and generally suppress the dependence of the nuisance functions on $\bX$. We also let $\|f\|^2 = \int f(\bx)^2 \ d\Pb(\bx)$ denote the squared $L_2(\Pb)$ norm.

\begin{theorem} \label{thm:asymp}
Along with Assumptions 1--6 and Conditions 1--2, assume
\begin{enumerate}
\item $\{\widehat\psi^*(\bdelta), \hat{\boldsymbol\eta}\} \inprob \{\psi^*(\bdelta), \boldsymbol\eta\}$ for true value $\boldsymbol\eta = (\pi_z, \theta_{za}, \lambda_z, \mu_z)$.
\item The influence functions $\boldsymbol{\varphi}({\boldsymbol\eta}) = (\varphi_\ell^{(\alpha)}, \varphi_u^{(\alpha)}, \varphi_\ell^{(\beta)}, \varphi_u^{(\beta)})$ (as defined in the Appendix) and their estimates $\boldsymbol{\varphi}(\hat{\boldsymbol\eta})$ fall in a Donsker class with probability one as $n \rightarrow \infty$, and are estimated consistently in the sense that $\| \boldsymbol{\varphi}({\boldsymbol\eta}) - \boldsymbol{\varphi}(\hat{\boldsymbol\eta}) \|^2=o_\Pb(1)$.
\item $\| \hat\pi_1 - \pi_1 \| \Big( \max_{z,a} \| \hat\theta_{za} - \theta_{za} \| + \max_z \| \hat\lambda_z - \lambda_z \| + \max_z \| \hat\mu_z - \mu_z \| \Big) = o_\Pb(1/\sqrt{n})$.
\item $\| \hat\eta_j - \eta_j \| \sqrt{ \Pb(|\eta_j| < |\hat\eta_j - \eta_j |) } = o_\Pb(1/\sqrt{n})$ for $\eta_j \in \{ \theta_{10}-\theta_{00}, \mu_1-(\lambda_1-\lambda_0), \mu_1-\lambda_0 \}$.
\end{enumerate}
Then
$$ \sqrt{n}\Big\{ \widehat\psi^*(\bdelta) - \psi^*(\bdelta) \Big\} \indist N\left( 0, \var\left[ \frac{\varphi^{(\beta)}(\delta_2) - \psi^*(\bdelta) \varphi^{(\alpha)}(\delta_1)}{\E\{\varphi^{(\alpha)}(\delta_1)\} } \right] \right) $$
where
$$ \varphi^{(\alpha)}(\delta_1) = \delta_1 \varphi_u^{(\alpha)} + (1-\delta_1) \varphi_\ell^{(\alpha)}  \ , \ \varphi^{(\beta)}(\delta_2) = \delta_2 \varphi_u^{(\beta)} + (1-\delta_2) \varphi_\ell^{(\beta)}$$
for $(\varphi_\ell^{(\alpha)}, \varphi_u^{(\alpha)}, \varphi_\ell^{(\beta)}, \varphi_u^{(\beta)})$ the efficient influence functions as defined in the Appendix.
\end{theorem}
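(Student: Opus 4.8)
The plan is to show that $\widehat\psi^*(\bdelta)$ is asymptotically linear with influence function equal to the efficient influence function of $\psi^*(\bdelta)$, from which both the stated normal limit and nonparametric efficiency follow at once. Since $\psi^*(\bdelta)$ is a smooth (ratio) transformation of the four bound functionals $(\alpha_\ell,\alpha_u,\beta_\ell,\beta_u)$ together with $\E\{\lambda_0(\bX)\}$, the argument splits into two tasks: (i) establishing joint asymptotic linearity of $(\hat\alpha_\ell,\hat\alpha_u,\hat\beta_\ell,\hat\beta_u)$ with the efficient influence functions $(\varphi_\ell^{(\alpha)},\varphi_u^{(\alpha)},\varphi_\ell^{(\beta)},\varphi_u^{(\beta)})$, and (ii) propagating these through the ratio map by the functional delta method. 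A useful simplification to record first: the factor $\Pn\{\hat\phi_0(S)\}$ in $\widehat\psi^*(\bdelta)$ cancels against the common denominators of $\hat\beta_\ell$ and $\hat\beta_u$, so $\widehat\psi^*(\bdelta)$ reduces to a clean ratio whose numerator estimates $\beta^*\,\E\{\lambda_0\}$ (with efficient influence function $\varphi^{(\beta)}(\delta_2)$) and whose denominator estimates $\alpha^*$ (with efficient influence function $\varphi^{(\alpha)}(\delta_1)$); the final ratio influence function then takes the standard form $[\varphi^{(\beta)}(\delta_2)-\psi^*(\bdelta)\varphi^{(\alpha)}(\delta_1)]/\E\{\varphi^{(\alpha)}(\delta_1)\}$.

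For task (i) I would treat each bound estimator as a one-step/estimating-equation estimator and expand $\hat b - b = (\Pn-\Pb)\varphi_b(\eta) + (\Pn-\Pb)\{\varphi_b(\hat\eta)-\varphi_b(\eta)\} + R(\hat\eta,\eta)$. The first term is the desired leading linear term. The empirical-process term is $o_\Pb(1/\sqrt n)$ by the Donsker-class membership and $L_2$-consistency of assumption 2 (equicontinuity of the empirical process over a Donsker class). The remainder $R$ carries the work. Its smooth component has the usual doubly robust product structure — products of errors in $\hat\pi_1$ against errors in $\hat\theta_{za},\hat\lambda_z,\hat\mu_z$ — which is $o_\Pb(1/\sqrt n)$ under the rate condition in assumption 3; this is where the Neyman orthogonality of the $\phi_z$ operator is used, so that only products of nuisance errors, rather than individual errors, survive.

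The main obstacle is the nonsmooth part of $R$ for the three estimators $\hat\alpha_\ell,\hat\beta_\ell,\hat\beta_u$, whose definitions plug estimated nuisances into the indicators $\one\{\hat\theta_0(0\mid\bX)>\hat\theta_1(0\mid\bX)\}$, $\one\{\hat\mu_1(\bX)>\hat\lambda_1(\bX)-\hat\lambda_0(\bX)\}$, and $\one\{\hat\mu_1(\bX)>\hat\lambda_0(\bX)\}$. Here I would first invoke conditions 1--2, which render the kink sets $\{\eta_j=0\}$ $\Pb$-null and thereby guarantee that $\alpha_\ell,\beta_\ell,\beta_u$ are pathwise differentiable with the ``naive'' derivative obtained by holding each indicator fixed — so that $\varphi_\ell^{(\alpha)},\varphi_\ell^{(\beta)},\varphi_u^{(\beta)}$ genuinely are the efficient influence functions. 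I would then bound the error from replacing $\one\{\hat\eta_j>0\}$ by $\one\{\eta_j>0\}$: this difference is supported on the disagreement set $\{|\eta_j|<|\hat\eta_j-\eta_j|\}$, on which the multiplying factor is itself $O(|\eta_j|)\le|\hat\eta_j-\eta_j|$, so a Cauchy--Schwarz bound yields a contribution of order $\|\hat\eta_j-\eta_j\|\sqrt{\Pb(|\eta_j|<|\hat\eta_j-\eta_j|)}$, which is exactly $o_\Pb(1/\sqrt n)$ by assumption 4 for each $\eta_j\in\{\theta_{10}-\theta_{00},\ \mu_1-(\lambda_1-\lambda_0),\ \mu_1-\lambda_0\}$. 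This step is the crux: it is what allows the estimated decision boundary to be treated as if fixed at the truth, and it is the analog, in this partial-identification setting, of the margin/non-exceptional-law arguments used for value estimation in optimal-regime problems.

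Finally, for task (ii), joint asymptotic linearity plus the central limit theorem gives a multivariate normal limit for $\sqrt n$ times the stacked estimation errors. The ratio map $(N,D)\mapsto N/D$ is differentiable at the limit provided the denominator $\alpha^*=\delta_1\alpha_u+(1-\delta_1)\alpha_\ell$ is bounded away from zero, which holds by the instrumentation assumption 3 (for $\delta_1>0$, $\alpha^*\ge\delta_1\alpha_u\ge\delta_1\epsilon$). Applying the delta method, with the consistency in assumption 1 pinning down the linearization point, yields $\sqrt n\{\widehat\psi^*(\bdelta)-\psi^*(\bdelta)\}\indist N(0,V)$ with $V$ the variance of the stated ratio influence function. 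Because each component influence function is efficient and the delta method preserves the efficient influence function under smooth transformations, this combined influence function is the efficient influence function for $\psi^*(\bdelta)$; hence the estimator attains the nonparametric efficiency bound and no separate efficiency computation is required.
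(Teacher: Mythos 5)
Your proposal is correct and takes essentially the same route as the paper's proof: the paper likewise derives the efficient influence functions by showing (Lemmas 2--3, via the optimal-regime margin argument) that under Conditions 1--2 the indicators may be treated as fixed, then expands the estimator into a Donsker-controlled empirical-process term, a CLT term, and a remainder bounded by the nuisance-error products of assumption 3 together with the Cauchy--Schwarz margin bound $\| \hat\eta_j - \eta_j \| \sqrt{ \Pb(|\eta_j| < |\hat\eta_j - \eta_j |) }$ of assumption 4. The only cosmetic difference is that you phrase the final step as the delta method applied to the stacked bound estimators, while the paper performs the equivalent ratio linearization algebraically---using exactly the identity you note, that after cancellation of $\Pn\{\hat\phi_0(S)\}$ the estimator satisfies $\widehat\psi^*(\bdelta)=\Pn\{\varphi_\beta(\hat{\boldsymbol\eta})\}/\Pn\{\varphi_\alpha(\hat{\boldsymbol\eta})\}$---and concludes with Slutsky and the continuous mapping theorem.
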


Theorem \ref{thm:asymp} is critically important. It shows that our proposed estimator is $\sqrt{n}$ consistent and asymptotically normal even if the nuisance functions $\boldsymbol\eta$ are estimated at slower than parametric $\sqrt{n}$ rates, as long as they attain the $n^{1/4}$-type rates described in Conditions 3--4 of Theorem \ref{thm:asymp}. Importantly, such $n^{1/4}$ rates are attainable under sparsity, smoothness, or other nonparametric structural constraints, whereas the $\sqrt{n}$ rates required by maximum-likelihood-type estimators necessitate the use of correct parametric models (which can be difficult if not impossible to specify even in the presence of a few continuous covariates). Further, 95\% confidence intervals can be easily constructed as $\widehat\psi^*(\bdelta) \pm 1.96 \hat\sigma(\bdelta)/\sqrt{n}$, using empirical estimates of the asymptotic variance $\hat\sigma^2=\Pn\{( \hat\varphi^{(\beta)} - \widehat\psi^* \hat\varphi^{(\alpha)})^2\} / \{\Pn( \hat\varphi^{(\alpha)})\}^2$ (the dependence on $\bdelta$ is suppressed for notational simplicity). This does not require any extra estimation or model-fitting since the estimated influence functions $\hat\varphi$ are already computed to construct the estimator $\widehat\psi^*$. Condition 2 of Theorem \ref{thm:asymp} puts a mild restriction on the complexity of the nuisance estimators we use, however this can be completely avoided by sample splitting \autocite{zheng2010asymptotic}. In summary, Theorem \ref{thm:asymp} allows for efficient $\sqrt{n}$-rate estimation and valid inference (e.g., confidence intervals), even though our analysis is  nonparametric and can incorporate complex machine learning methods.

\subsection{Bounding the Effect of ICU Wait Times}

We estimate the bounds for SCATE $\psi$ both with and without covariates to understand how covariate information might narrow the bounds. Without using covariates, the upper and lower bounds on the survivor-complier proportion $\alpha$ are 0 and 0.08, respectively.  Thus the percentage of patients that would wait less than four hours to gain access to the ICU due to bed availability is between 0 and 8\%. The upper and lower bounds for $\beta$ are -0.19 and 0.06.  As such, bed availability among those always admitted to the ICU in less than four hours may have reduced mortality by 19\% or increased mortality by 6\%.  The corresponding bounds for the SCATE $\psi$ are -1 and 1 and thus are uninformative.

We next estimate these bounds exploiting covariate information to weaken identifying assumptions and gain more efficiency. We model the covariate relationships (i.e., nuisance functions $\boldsymbol\eta$) flexibly using Super Learner \autocite{van2011targeted}, combining logistic regression, generalized additive models, lasso, and multivariate adaptive regression splines.  With covariates, the bounds on $\alpha$ are 0.01 and 0.07, and the bounds on $\beta$ are -0.14 and 0.05. The bounds for SCATE $\psi$ remain -1 and 1.  Thus although incorporating covariate information sharpens the bounds on $\alpha$ and $\beta$ considerably (reducing length by 23\% for $\alpha$ and 22\% for $\beta$), it does not sharpen the bounds for the ratio $\psi$. Next, we conduct a sensitivity analysis where we estimate $\psi$ using the estimated bounds on $\alpha$ and $\beta$ as sensitivity parameters. This analysis creates a grid of possible $\psi$ estimates, and for each of these $\psi$ estimates, we construct point-wise confidence intervals.  The results are contained in Figure~\ref{fig:sens.plots}. These results are substantially more informative than the bounds on $\psi$ alone.  We observe that the preponderance of estimates for $\psi$ indicate that shorter waiting times for the ICU reduces mortality.  The sensitivity plot does not rule out the possibility that shorter waiting times increase mortality. However, the bounds  clearly favor the possibility that prompt ICU care reduces mortality. Moreover, analyses of these same data using an IV analysis found that ICU care generally lowers mortality rates \autocite{keele2016stronger}.  While that estimate is based on a different sub-population, finding that ICU care is generally beneficial in conjunction with our analysis provides further evidence that prompt ICU care is beneficial for patients. 

\begin{figure}
\spacingset{1}
    \centering
\includegraphics[width=.7\textwidth]{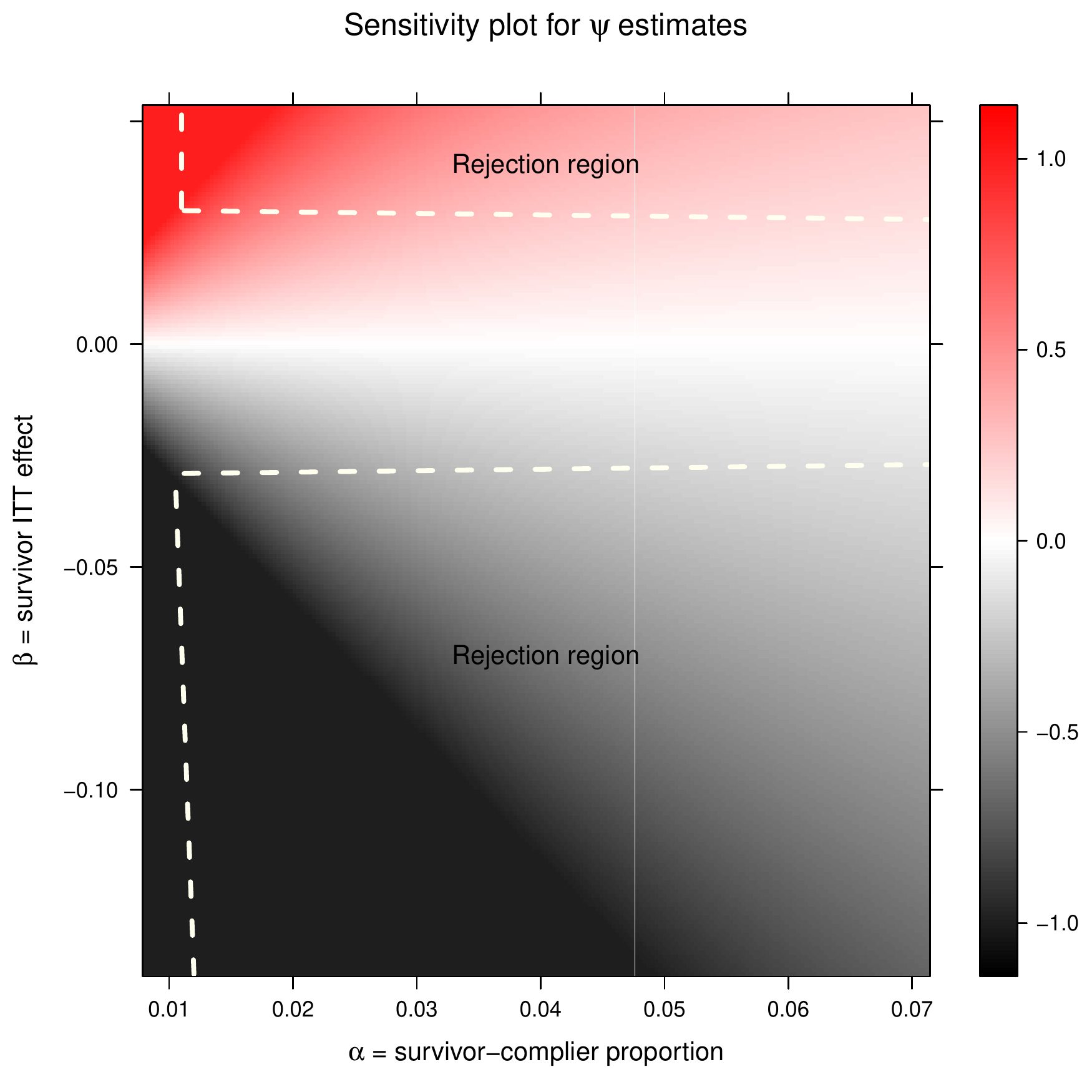}
\caption{Sensitivity plot for the effect of ICU waiting time on mortality. Points in the rejection region do not include zero in a 95\% point-wise confidence interval.}
\label{fig:sens.plots}
\end{figure}

\section{Simulation}
\label{sec:sim}

Finally, we conduct a simulation to better understand under what conditions the bounds will be informative. Here, we simulate data from the true causal model and calculate the bounds while varying the magnitude of the true values for $\alpha$ and $\psi$. The sample size for each simulation is 1000, and the simulations were repeated 1000 times for each combination of the true $\alpha$ and $\psi$ parameters. In all scenarios, the instrumental variable, $Z$, was generated as a Bernoulli random variable with the probability of success set to 0.5.  We simulated principal strata membership probabilistically, with the probability of being in the complier-survivor strata set to $\alpha$, and the probability of being in the other five principal strata as $(1- \alpha)/5$. Within the complier-survivor stratum, $Y$ is a draw from a binomial distribution with probability $(0.5 - \psi)/2$, and is a draw from a binomial distribution with probability 0.5 for the over five principal strata. We  calculated the bounds for $\psi$ using values of 0 to 1 for $\alpha$ and -1 to 1 for $\psi$ in increments of 0.01. For each simulation, we recorded the length of the bounds and whether the sign for the lower and upper bound agree. These two quantities capture how much information is in the bounds as a function of whether the instrument encourages $A=1$ and the magnitude of the true causal effect. When $A$ is not highly affected by the instrument and when the true causal effect is small the bounds are unlikely to be informative. However, of greater interest is the point at which the bounds can at least be informative about the sign of the true causal effect.

In the simulation, we also studied how covariates can narrow the bounds. In the first set of simulations, we did not condition on covariates. In two additional simulations, we generated $S$ as a function of two Normally distributed covariates. In one simulation, the covariates were more weakly informative for $S$ as both were correlated with $S$ at approximately 0.50; in the second set, the two covariates were more informative as both were correlated with $S$ at approximately 0.85.

We summarize the results from the simulations in Figure~\ref{fig:sim.plots}.  First, we examine the length of the bounds. In this plot, we observe that the overall length of the bounds is primarily a function of the value of $\alpha$. That is, the bounds will be at their maximum length of 2 when $\alpha$ is approximately 0.40 or less regardless of the size of $\psi$. Next, we plot when the bounds are informative as to the sign of the true causal effect. Here, the results depend on the magnitude of both $\alpha$ and $\psi$. We observe that when the $\psi$ is larger and $\alpha$ is less than 0.40, the bounds may be informative about the sign of the causal effect, but given the information in the two other panels the bounds are likely to be quite wide.  However, once $\alpha$ is larger than 0.40 the information about the sign of the causal effect depends directly on the magnitude of $\psi$. Moreover, for larger values of $\alpha$ the bounds can be highly informative. For example, if $\psi$ is 0.10, and $\alpha$ is 0.65, the bounds are 0.04 and 0.26. If $\alpha$ increases to 0.75 and $\psi$ remains 0.10, the bounds narrow to 0.04 and 0.18.  If $\alpha$ remains 0.65, but $\psi$ increases to 0.20, the bounds actually widen to 0.06 and 0.42. 

Figure~\ref{fig:sim.plots} also shows results from the simulations where covariates were used to construct the bounds. First, we observe that even with the covariates that are more weakly predictive of $S$, the width of the bounds decreases and the region in which the sign of the bounds is informative is much larger. Across all the values for $\alpha$ and $\psi$ in the simulations, the average length of the bounds without covariates was 0.91. With the weakly predictive covariates,

\begin{figure}[H]
\spacingset{1}
\centering
\begin{subfigure}{.45\textwidth}
  \centering
  \includegraphics[width=.9\linewidth]{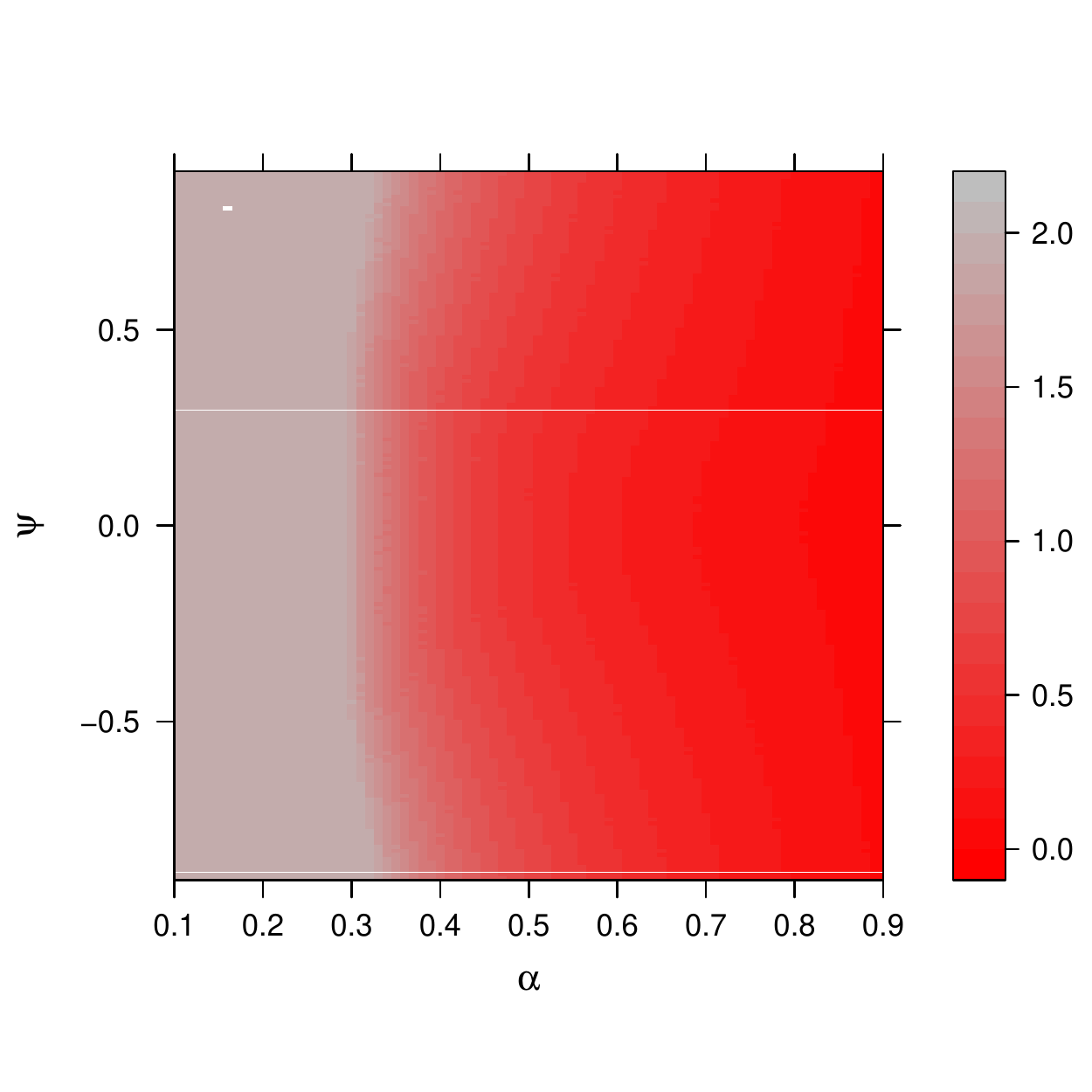}
  \vspace{-5mm}
  \caption{\footnotesize{Length of Bounds - No Covariates}}
  \label{fig:sub1}
\end{subfigure}
\begin{subfigure}{.45\textwidth}
  \centering
  \includegraphics[width=.9\linewidth]{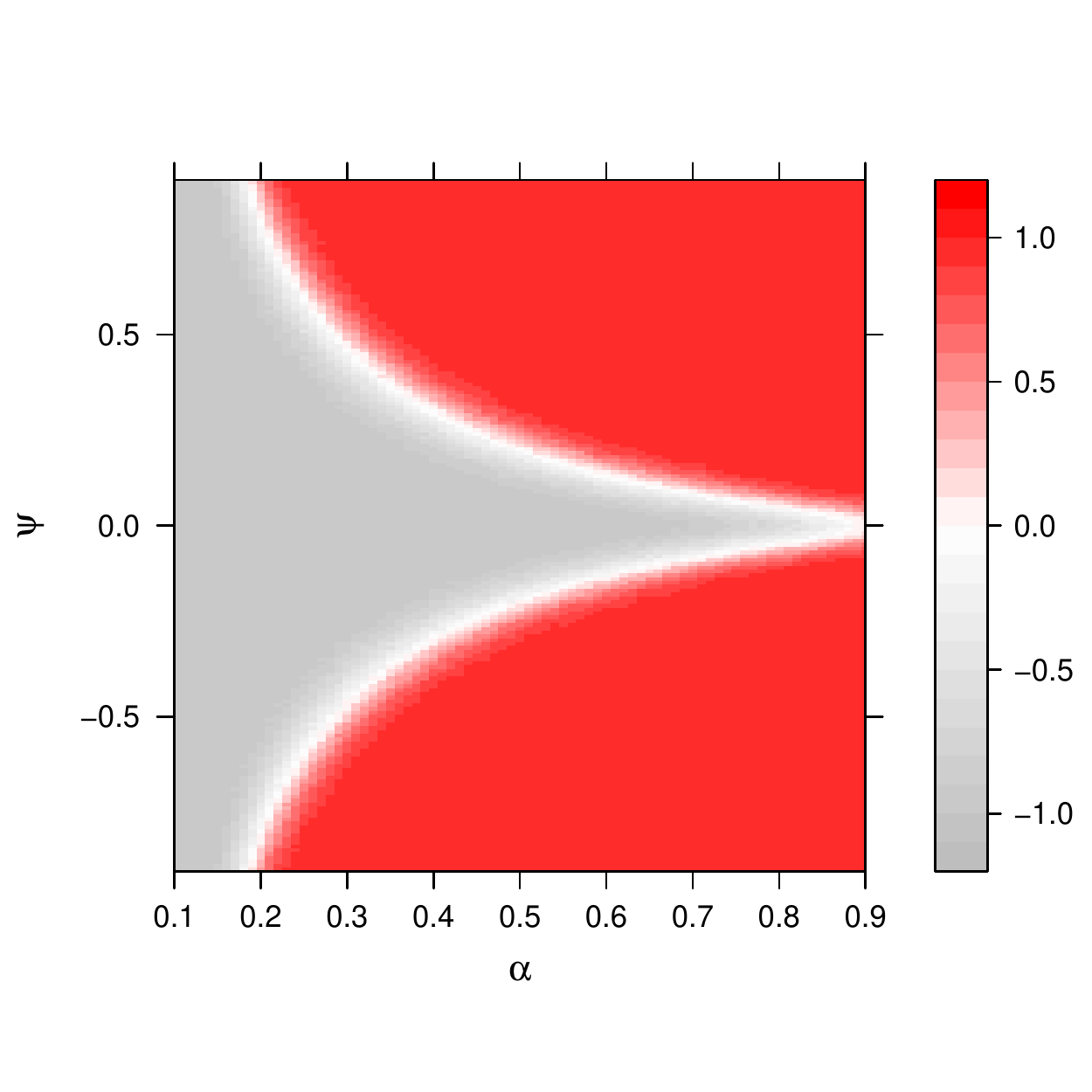}
  \vspace{-5mm}
  \caption{\footnotesize{Informative Sign - No Covariates}}
  \label{fig:sub2}
\end{subfigure}
\begin{subfigure}{.45\textwidth}
  \centering
  \includegraphics[width=.9\linewidth]{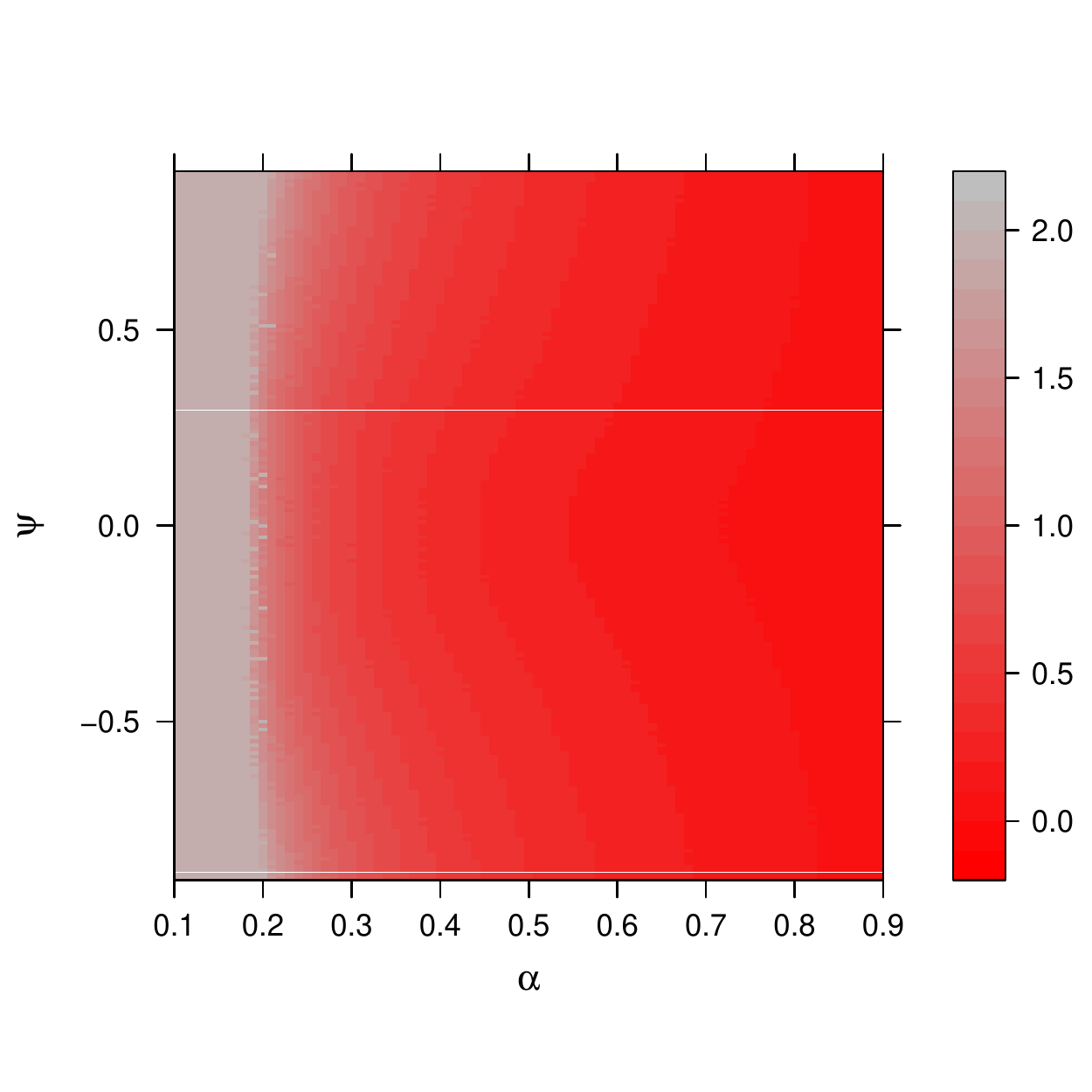}
  \vspace{-5mm}
  \caption{\footnotesize{Length of Bounds - Weak Covariates}}
  \label{fig:sub3}
\end{subfigure}
\begin{subfigure}{.45\textwidth}
  \centering
  \includegraphics[width=.9\linewidth]{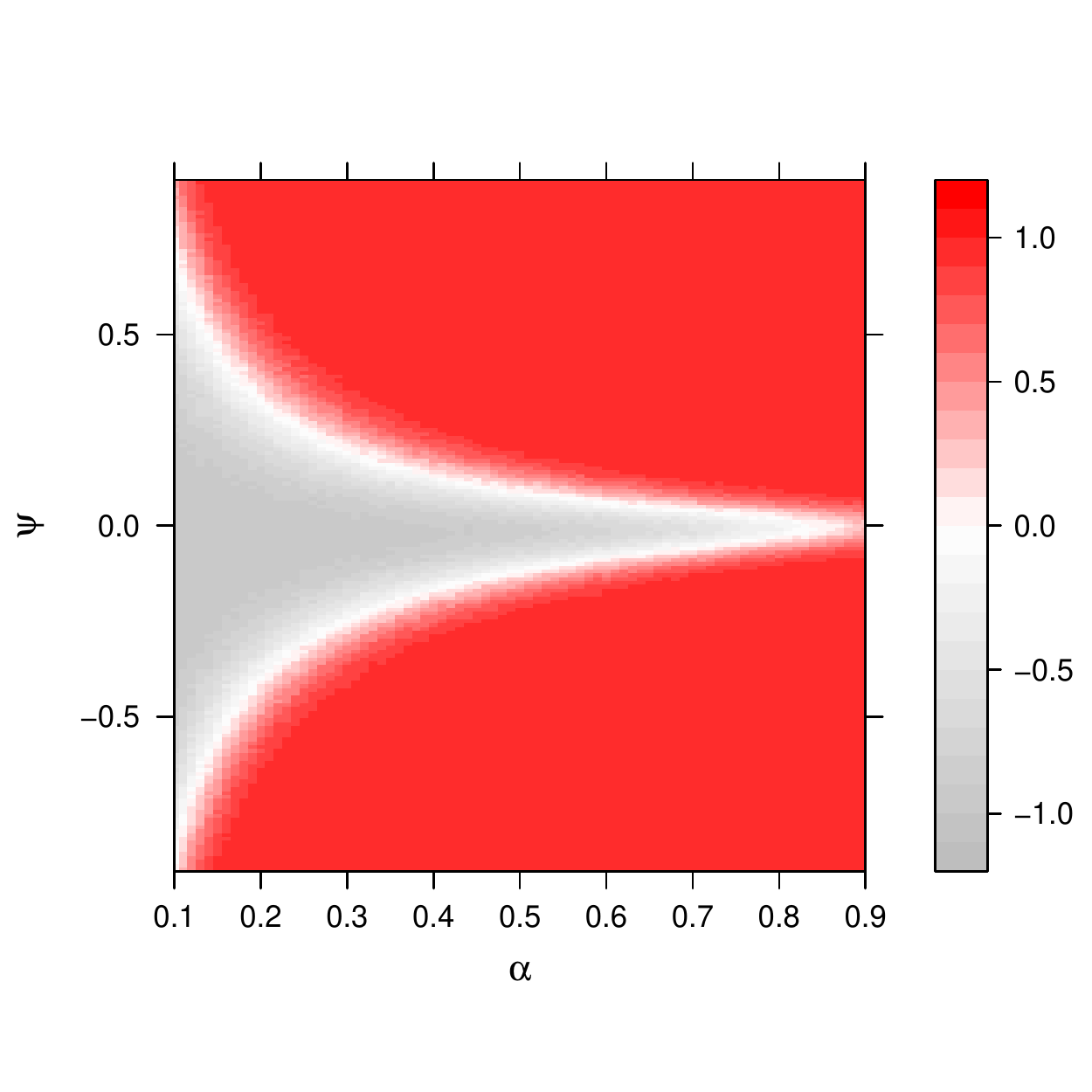}
  \vspace{-5mm}
  \caption{\footnotesize{Informative Sign - Weak Covariates}}
  \label{fig:sub4}
\end{subfigure}

\begin{subfigure}{.45\textwidth}
  \centering
  \includegraphics[width=.9\linewidth]{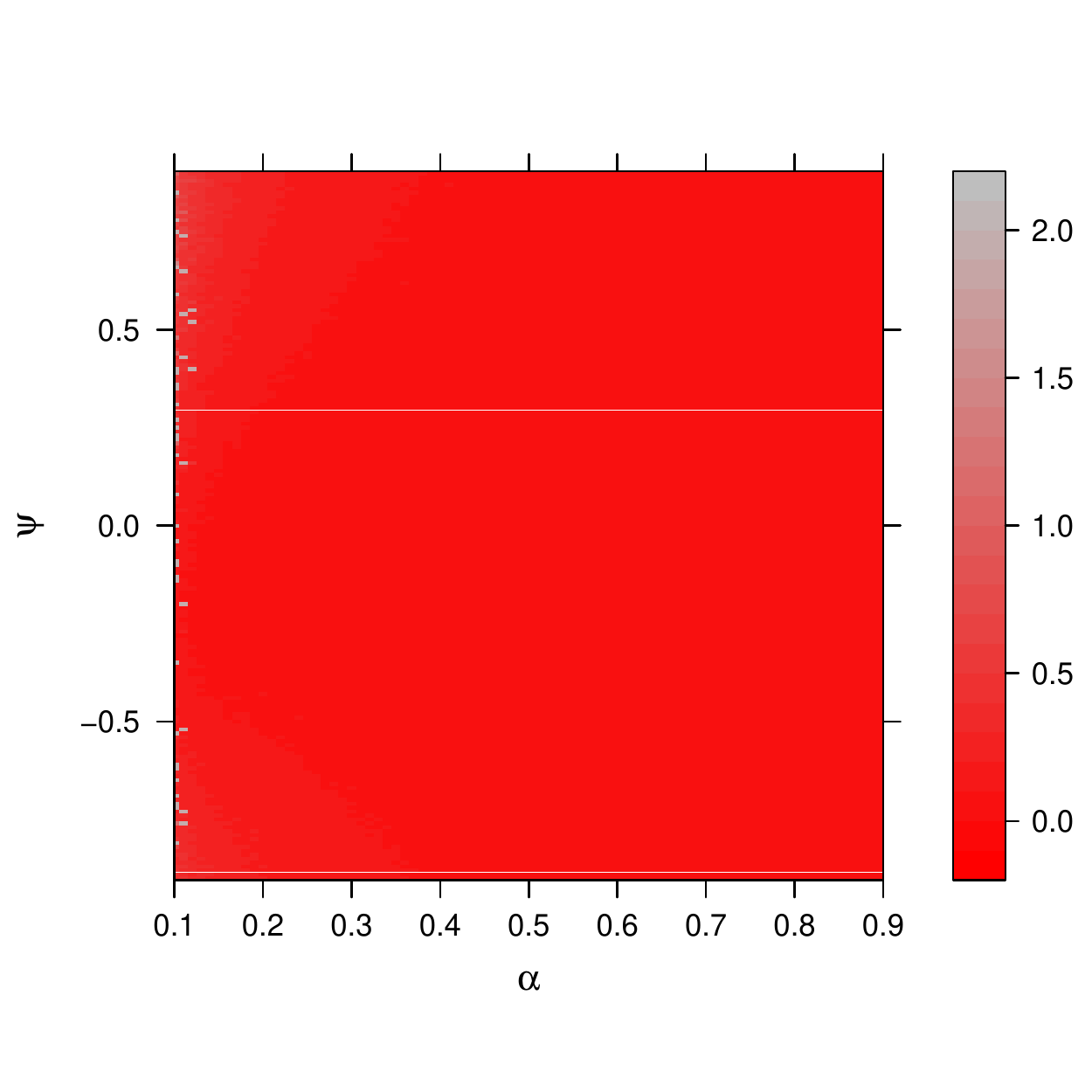}
  \vspace{-5mm}
  \caption{\footnotesize{Length of Bounds - Strong Covariates}}
  \label{fig:sub3}
\end{subfigure}
\begin{subfigure}{.45\textwidth}
  \centering
  \includegraphics[width=.9\linewidth]{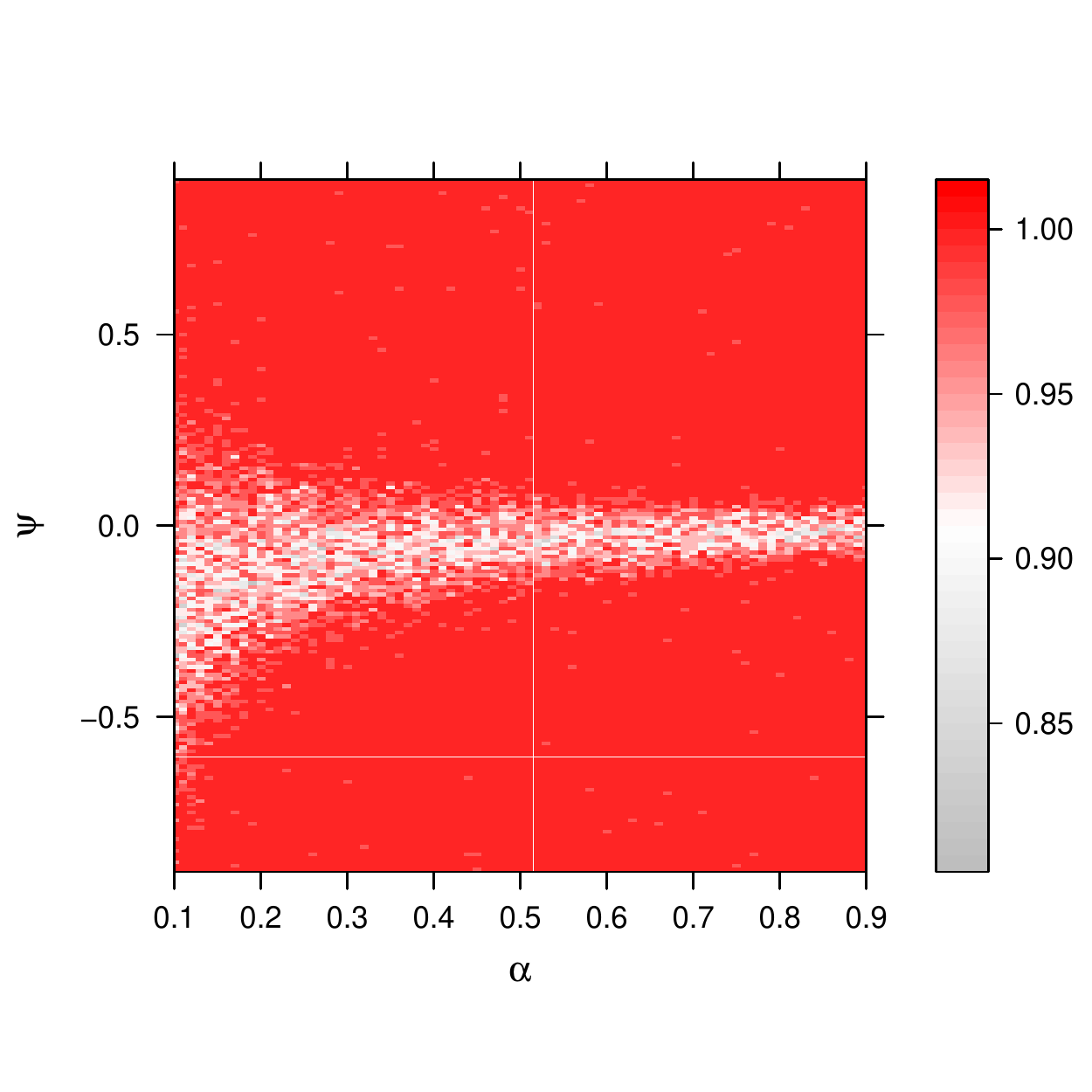}
  \vspace{-5mm}
  \caption{\footnotesize{Informative Sign - Strong Covariates}}
  \label{fig:sub4}
\end{subfigure}

\caption{\footnotesize{Estimated bounds using simulated data from a known causal model, across 1000 simulation for each $\alpha$ and $\psi$ value. Grey areas indicate that bounds are uninformative. ``Weak covariates'' refers to the case where covariates weakly predict $S = 1$. ``Strong covariates'' refers to the case where covariates strongly predict $S = 1$.}}
\label{fig:sim.plots}
\end{figure}

\noindent the average length of the bounds decreases to 0.58, and with the strongly predictive covariates, the average length of the bounds was 0.06. The simulations clearly indicate that the bounds can be highly informative, but it is essential that either values of $\alpha$ are fairly high or that the covariates be at least weakly predictive of $S$. In the ICU data, we found $\alpha$ to be much lower, and the covariates were not strongly associated with ICU admission. In these situations, use of the sensitivity analysis will likely be more informative.

\section{Discussion}
\label{sec:dis}

One critical clinical question of practical importance given limited health care resources is whether delayed admission to the ICU is beneficial. Answers to this question are elusive since the population referred to critical care tends to be sicker than the population that remains on a general hospital ward, and wait times for the ICU are censored for any patients not admitted to the ICU. While instruments can allow for causal inference in the presence of unobserved confounders, we proved that IV methods do not identify the usual causal estimand of interest in our application. We demonstrated that while  point-identification is not possible without further assumptions, partial identification is possible.  These bounds can incorporate covariate information in a flexible and data-adaptive way.  Moreover, our use of influence function-based methods allows for efficient estimation and inference, even when modeling covariate relationships very flexibly using machine learning tools. Despite their usefulness, influence-function-based approaches seem not to have been used previously in partial identification problems, perhaps due to the relative complexity and nonsmoothness issues discussed in Section 3.3. Simulations show that our bounds can be highly informative, but the instrument must encourage a relatively large portion of the population to be survivor-compliers, or else covariates must strongly predict selection. That is,  to produce bounds that are highly informative about the true causal effect, covariate selection is critical unless the effect of the instrument is very strong. Our results serve as an important guide for data collection, by indicating to investigators the importance of measuring covariates that predict selection. For example, one could utilize pilot study data to identify such covariates. Nonetheless our data-driven sensitivity analysis approach can provide a useful picture even when bounds are wide, as observed in our ICU application.

As is often the case with studies based on partial identification, we cannot offer the certainty of a point estimate and confidence interval. However, we think our proposed methods offer important evidence about the clinical question of ICU waiting times. While the bounds on the causal effect of waiting times were uninformative about the true causal effect, use of a sensitivity analysis did provide evidence that waiting less than four hours to receive critical care reduces mortality.  While we cannot rule out causal effects of the opposite sign, we argue that, taken with other analyses, there is some evidence that reducing waiting times for the ICU is a clinical goal worth pursuing. Although for some causal questions the best available evidence may not take the form of a single point estimate, we have showed that advances in nonparametric efficiency theory and data-adaptive regression methods can still play a crucial role. We look forward to more future developments in partial identification problems with interesting study designs and complex confounding.

\clearpage
\stepcounter{section}
\printbibliography[title={\thesection \ \ \ References}]

\pagebreak
\setcounter{page}{1}

\section{Appendix}

\setlength{\parindent}{0cm}

\subsection{Proof of Proposition 1}

To prove non-identifiability of $\psi$, we only need to find an observed data distribution that can yield different values of the parameter $\psi$. We give a simple example here, but non-identifiability also follows from the fact that the bounds in Theorem 1 are sharp. Suppose there are no covariates ($\bX=\emptyset$) and there are only two principal strata, indexed by variable $C \sim \text{Bernoulli}(0.5)$ with
\begin{align*}
C=1 &\iff (S^{(z=0)}=S^{(z=1)}=1, A^{(z=1)}>A^{(z=0)}), \\
C=0 &\iff (S^{(z=0)}<S^{(z=1)}, A^{(z=1)}=1) .
\end{align*}
This means there are only treatment-compliers and selection-compliers who take treatment. By unconfoundedness, $ \Pb(Z=z,S=s,A=a) = \Pb(S^{(z)}=s,A^{(z)}=a) = \one(z=a=1)/2$. Also suppose $Y$ is binary and $Y^{(z=0)}=0$ with probability one. Then
$\Pb(Y=1 \mid A,S, Z)=0$ except when $A=S=Z=1$, in which case
$$ \Pb(Y=1 \mid A=S=Z=1)= \sum_{c=0}^1 \Pb(Y^{(z=1)}=1 \mid C=c) \Pb(C=c) = 0.5 (\psi + \xi) $$
where $\xi=\Pb(Y^{(z=1)}=1 \mid C=0)$. Thus any choices of $(\psi,\xi)$ with the same sum $(\psi+\xi)$ would yield the same observed data distribution; for example, both $(\psi,\xi)=(1,0)$ and $(\psi,\xi)=(0,1)$ give $\Pb(Y=1 \mid A=S=Z=1)=0.5$. 

\subsection{Proof of Proposition 2}

We have
\begin{align*}
\beta &= \E(Y^{(z=1)} - Y^{(z=0)} \mid S^{(z=1)}=S^{(z=0)}=1) \\   
&= \E(Y^{(z=1,A^{(z=1)})} - Y^{(z=0,A^{(z=0)})} \mid S^{(z=1)}=S^{(z=0)}=1) \\
&= \E(Y^{(A^{(z=1)})} - Y^{(A^{(z=0)})} \mid S^{(z=1)}=S^{(z=0)}=1) \\
&= \E\{(Y^{(a=1)} - Y^{(a=0)}) \one(A^{(z=1)} > A^{(z=0)}) \mid S^{(z=1)}=S^{(z=0)}=1\} \\
&= \E(Y^{(a=1)} - Y^{(a=0)} \mid S^{(z=1)}=S^{(z=0)}=1, A^{(z=1)} > A^{(z=0)}) \\
& \hspace{.4in} \times \Pb(A^{(z=1)} > A^{(z=0)} \mid S^{(z=1)}=S^{(z=0)}=1) \\
&= \psi \alpha / \Pb(S^{(z=1)}=S^{(z=0)}=1)
\end{align*}
where the first equality follows by definition, the second by the fact that $Y^{(z)}=Y^{(z,A^{(z)})}$ from Assumption 1 (consistency), the second by Assumption 5 (exclusion), the third by Assumption 6 (monotonicity), the fourth by iterated expectation and Assumption 3 (instrumentation), and the last by definition. Rearranging gives the desired result.

\subsection{Proof of Theorem 1}

To ease notation, in this section all potential outcomes are with respect to interventions on the instrument $Z$ (not treatment $A$), so that we can write $Y^1 = Y^{(z=1)}$, for example. 

\subsubsection{Bounds on $\alpha$}

For the upper bound on $\alpha$, clearly we have
$$ \Pb(A^1 > A^0 \mid \bX) \leq \Pb(A^1 > A^0 \mid \bX) + \Pb(A^1 > S^0 \mid \bX) . $$
Now note that
\begin{align*} \{A^1 > A^0\} \cup \{A^1 > S^0\} &= [ \{A^1 > A^0\} \cup \{A^1 > S^0\} \cup \{A^1=A^0=1\}] \cap \{A^1=A^0=1\} \\
&= \{A^1=1\} \cap \{A^1=A^0=1\} = \{A^1=1\} \cap \{A^0=1\}
\end{align*}
where the first equality follows from simple logic, the second by definition of $\{A^1=1\}$, and the third by monotonicity. Therefore
\begin{align*} \Pb(A^1 > A^0 \mid \bX) &\leq \Pb(A^1 = 1 \mid \bX) - \Pb(A^0 = 1 \mid \bX) \\
&= \Pb(A = 1 \mid \bX,Z=1) - \Pb(A = 1 \mid \bX,Z=0)
\end{align*}
where the equality follows by consistency, positivity, and unconfoundedness. Hence
\begin{align*}
\alpha &= \E\{ \Pb(A^1 > A^0 \mid \bX)\} \\
&\leq \E\{ \Pb(A = 1 \mid \bX,Z=1) - \Pb(A = 1 \mid \bX,Z=0) \} = \alpha_u .
\end{align*}

For the lower bound on $\alpha$, we similarly have
$$ \Pb(A^1 > A^0 \mid \bX) \geq \{ \Pb(A^1 > A^0 \mid \bX) - \Pb(S^0 =A^1=0 \mid \bX) \}_+ . $$
Now note that the right-hand side (before taking the positive part) is
\begin{align*} 
\Pb(A^1 > A^0 \mid \bX) &+ \Pb(A^0=A^1=0 \mid \bX) - \Pb(A^0=A^1=0 \mid \bX) - \Pb(S^0 =A^1=0 \mid \bX) \\
&= \Pb(A^0=0 \mid \bX) - \Pb(A^1 = 0 \mid \bX) \\
&= \Pb(A=0 \mid \bX, Z=0) - \Pb(A = 0 \mid \bX, Z=1)
\end{align*}
where the first equality follows by the facts that $\{A^1 > A^0\} \cup \{A^0=A^1=0\} = \{A^0=0\}$ since $A$ is binary and
$$ \{ A^0=A^1=0 \} \cup \{ S^0 =A^1=0 \} = \{A^0=0\} $$
by monotonicity, and the second follows by consistency, positivity, and unconfoundedness. Hence
\begin{align*}
\alpha &= \E\{ \Pb(A^1 > A^0 \mid \bX)\} \geq \E[\{ \Pb(A =0 \mid \bX,Z=0) - \Pb(A =0 \mid \bX,Z=1) \}_+] = \alpha_\ell .
\end{align*}

\subsubsection{Bounds on $\beta$}

Before deriving the bounds on $\beta$, we first give a useful lemma.

\begin{lemma}
\label{lem:mix_bds}
Suppose $H=pF + qG$, where $(H,F,G)$ are cumulative distribution functions for non-negative random variables $(X_H,X_F,X_G)$, and $p>0$ and $q>0$ are weights with $p+q=1$. Suppose the parent distribution $H$ and weights $p$ and $q$ are known, but the component distributions $F$ and $G$ are unknown.  Then sharp bounds on the mean under $F$ are given by
$$ \int \frac{(p-H) \vee 0}{p} = \int (1-F_\ell^* ) \leq \int X_F \ dF \leq \int (1-F_u^*) = \int \frac{(1-H) \wedge p}{p} $$ 
for the bounding distributions $F_\ell^* = \left( \frac{H}{p} \right) \wedge 1$ and $F_u^* = \left( \frac{H-q}{p} \right) \vee 0$.
\end{lemma}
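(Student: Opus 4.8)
The plan is to reduce everything to the tail-integral representation of the mean of a non-negative random variable, $\int X_F \, dF = \int_0^\infty \{1 - F(x)\} \, dx$, and then to optimize over $F$ pointwise in $x$. Since $H = pF + qG$ with $p, q > 0$ known, I can solve for $F = (H - qG)/p$, so that the only remaining freedom is the choice of the component CDF $G$. The mean under $F$ is a decreasing functional of $F$ (larger $F$ pointwise means smaller $\int (1-F)$), and $F$ is itself decreasing in $G$; hence bounding the mean amounts to pushing $G$ pointwise as large or as small as the constraints allow.

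Next I would identify those constraints. At each $x$, two requirements pin down the admissible range of $G(x)$: first, $G$ is itself a CDF, so $0 \le G \le 1$; second, $F = (H-qG)/p$ must also be a valid CDF, which forces $0 \le H - qG \le p$, i.e. $(H-p)/q \le G \le H/q$. Intersecting these gives $\max\{0, (H-p)/q\} \le G \le \min\{1, H/q\}$ pointwise. Substituting the upper endpoint $G = \min\{1, H/q\}$ into $F = (H-qG)/p$ yields the smallest admissible $F$, namely $F_u^* = \{(H-q)/p\} \vee 0$, while the lower endpoint $G = \max\{0, (H-p)/q\}$ yields the largest admissible $F$, namely $F_\ell^* = (H/p) \wedge 1$. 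A short case split on whether $H \le p$ (respectively $H \le q$) then confirms the stated identities $\int(1 - F_\ell^*) = \int \{(p-H)\vee 0\}/p$ and $\int(1-F_u^*) = \int \{(1-H)\wedge p\}/p$.

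To establish the two inequalities I would integrate the pointwise bounds $F_u^* \le F \le F_\ell^*$ against the tail formula; monotonicity of the integral gives $\int(1-F_\ell^*) \le \int(1-F) \le \int(1-F_u^*)$, which is exactly the asserted two-sided bound. The remaining and most delicate point is \emph{sharpness}: I must exhibit genuine decompositions attaining each bound, which means verifying that the extremal choices $G = \min\{1, H/q\}$ and $G = \max\{0, (H-p)/q\}$ are themselves bona fide CDFs (nondecreasing, right-continuous, with limits $0$ and $1$) and that the induced $F$ is likewise valid. Because these extremal $G$ are obtained by applying the monotone operations $\min$ and $\max$ to the nondecreasing function $H/q$ (or $(H-p)/q$), monotonicity and right-continuity are inherited immediately, and the correct limits follow from $H(0^-)=0$ and $H(\infty)=1$. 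This is where I expect to spend the most care, since it is precisely what upgrades the worst-case inequalities into the claim that the bounds are sharp. Finally, because a single $G$ attains the pointwise-optimal $F$ at every $x$ simultaneously, no separate argument reconciling the pointwise optima into one admissible distribution is needed.
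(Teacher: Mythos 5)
Your proof is correct, and it arrives at the same bounding distributions and the same sharpness witnesses as the paper, but the inequality half takes a genuinely different and slightly stronger route. The paper never establishes pointwise domination: it verifies the integrated inequalities $\int F \leq \int F_\ell^*$ and $\int F_u^* \leq \int F$ directly, by substituting $F=(H-qG)/p$, computing $\int(F_\ell^*-F)$ and $\int(F-F_u^*)$, and splitting the integrals over the regions $\{H>p\}$ and $\{H\leq p\}$ (respectively $\{H>q\}$ and $\{H\leq q\}$), using only that $F,G,p,q$ lie in $[0,1]$. You instead characterize the full admissible set for the unknown component, $\max\{0,(H-p)/q\} \leq G \leq \min\{1,H/q\}$ at every $x$, and read off the pointwise envelope $F_u^* \leq F \leq F_\ell^*$ before integrating. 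This buys you two things the paper's computation does not: a stochastic-dominance statement strictly stronger than the mean inequality, so the identical argument bounds any monotone functional of $F$ (quantiles, $\E\{h(X_F)\}$ for monotone $h$) rather than just $\int X_F \, dF$; and a derivation that explains where $F_\ell^*$ and $F_u^*$ come from, instead of verifying candidate bounds post hoc. It also makes sharpness nearly free, as you note, because a single extremal $G$ attains the pointwise optimum simultaneously at all $x$ --- your substitution $G=\min\{1,H/q\}$ yielding $F=F_u^*$ is exactly the paper's identity $pF_u^*+qG_\ell^*=H$ with $G_\ell^*=(H/q)\wedge 1$, verified there by the same case split on $H\gtrless q$. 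Your explicit check that the extremal $G$ and the induced $F$ are bona fide CDFs (nondecreasing and right-continuous as $\min$/$\max$ of such functions, with the correct limits from $H(0^-)=0$ and $H(\infty)=1$) is a detail the paper leaves implicit, and you are right that it is precisely what upgrades the worst-case inequalities to sharpness. The paper's version is shorter on the page; yours is more transparent and more general. One small caveat: validity of $F$ as a CDF imposes more than $0\leq F\leq 1$ (monotonicity, right-continuity), so the interval you intersect is a relaxation of the true constraint set --- this is harmless for the bounds, which only need the relaxation, and your CDF check on the extremal choices closes the loop for sharpness.
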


\begin{proof}
Since the random variables associated with the distributions $(H,F,G)$ are non-negative, we can write expectations as integrated survival functions, as in $\int X_F \ dF = \int (1 - F)$. Therefore to show that the means under $F_\ell^*$ and $F_u^*$ are valid bounds, we must show that $\int F \leq \int F_\ell^*$ and $\int F_u^* \leq \int F$. \\

For $F_\ell^*$ note that
\begin{align*}
\int (F_\ell^* - F) &= \int \left( \frac{H}{p} \right) \wedge 1 - \int \left( \frac{ H - qG}{p} \right)  \\
&= \int_{H>p} (1-F) + \int_{H \leq p} \frac{qG}{p} \geq 0 
\end{align*}
where the last inequality follows since $1 \geq F=(H-qG)/p$ and $qG/p \geq 0$, because $(F,G,p,q)$ are all probabilities bounded between zero and one. Similarly, for $F_u^*$ we have
\begin{align*}
\int (F - F_u^*) &= \int \left( \frac{ H - qG}{p} \right) - \int \left( \frac{H-q}{p} \right) \vee 0 \\
&= \int_{H > q} \frac{q (1-G)}{p}  + \int_{H \leq q} F \geq 0
\end{align*}
where again the last inequality follows since $(F,G,p,q)$ are all in $[0,1]$. \\

To show sharpness, we must give component distributions $F$ and $G$ that attain the bounds and can be mixed using $(p,q)$ to form any known $H$, i.e., we can show that $pF_u^* + q G_\ell^*=H$ for $G_\ell^*=(H/q)\wedge 1$ (for $F_\ell^*$ we can simply reverse the role of $F$ and $G$). This follows since
\begin{align*}
pF_u^* + qG_\ell^* &= p \left\{ \left( \frac{H-q}{p} \right) \vee 0 \right\} + q \left\{ \left( \frac{H}{q} \right) \wedge 1 \right\} \\
&= \begin{cases} p \left( \frac{H-q}{p} \right) + q & \text{ if } H > q \\
0 +  q  \left( \frac{H}{q} \right) & \text{ if } H \leq q \end{cases} \\
&= H .
\end{align*}
\end{proof}

First note that 
\begin{align*}
\beta &= \E\{\E(Y^1 - Y^0 \mid \bX, S^0=S^1 = 1) \Pb(S^0=S^1=1 \mid \bX) \} / \Pb(S^0=S^1=1) \\
&=  \E\{\E(Y^1 - Y^0 \mid \bX, S^0 = 1) \Pb(S^0=1 \mid \bX) \} / \Pb(S^0=1) \\
&= \frac{\E[\{\E(Y^1 \mid \bX, S^0 = 1) - \E(Y \mid \bX, S=1, Z=0) \} \Pb(S=1 \mid \bX, Z=0) ] }{ \E\{ \Pb(S=1 \mid \bX, Z=0) \} } 
\end{align*} 
where the first equality follows by definition (and iterated expectation), the second by monotonicity, and the third by consistency, positivity, and unconfoundedness. \\

Now we use Lemma 1 to construct bounds for $\E(Y^1 \mid \bX, S^0 = 1)$. Let
\begin{align*}
H &= \Pb(Y^1 \leq y \mid \bX, S^1=1) \\
F &= \Pb(Y^1 \leq y \mid \bX, S^0=1) \\
p &= \Pb(S^0=1 \mid \bX, S^1=1) 
\end{align*}
so that $\E(Y^1 \mid \bX, S^0=1) = \int (1-F) =1-F(0)=\overline{F}(0)$ since $Y$ is binary. Also note that
\begin{align*}
H & =\Pb(Y \leq y \mid \bX, S=1, Z=1) \\
p & = \frac{\Pb(S^0=1 \mid \bX)}{\Pb(S^1=1 \mid \bX)} = \frac{\Pb(S=1 \mid \bX, Z=0)}{\Pb(S=1 \mid \bX, Z=1)} = \frac{\lambda_0(\bX)}{\lambda_1(\bX)}
\end{align*}
where the equality for $H$ and the second equality for $p$ use consistency, positivity, and unconfoundedness, and the first equality for $p$ uses monotonicity. \\

Therefore Lemma 1 gives
\begin{align*}
\E(Y^1 \mid \bX, S^0 = 1) &\geq \int \frac{(p-H) \vee 0}{p} = \int \frac{(\overline{H} - q) \vee 0}{p} \\
&= \{ \mu_1(\bX) - \lambda_1(\bX) + \lambda_0(\bX) \}_+ / \lambda_0(\bX)
\end{align*}
where the second equality comes from rearranging and using the fact that $Y$ is binary so that $\int \overline{H} = \overline{H}(0) = \E(Y \mid \bX, S=1, Z=1)$. Similarly we have
\begin{align*}
\E(Y^1 \mid \bX, S^0 = 1) &\leq \int \frac{(1-H) \wedge p}{p} \\
&= \{ \mu_1(\bX) \wedge \lambda_0(\bX) \} / \lambda_0(\bX) .
\end{align*}

Therefore, using the definitions from the main text and plugging in the results from Lemma 1 gives
\begin{align*}
\beta &= \E\{ \E(Y^1 \mid \bX, S^0 = 1) \lambda_0(\bX) - \mu_0(\bX) \} / \E\{ \lambda_0(\bX) \} \\
& \geq \E\{ \{ \mu_1(\bX) - \lambda_1(\bX) + \lambda_0(\bX) \}_+ - \mu_0(\bX) \} / \E\{ \lambda_0(\bX) \} = \beta_\ell
\end{align*}
and similarly
\begin{align*}
\beta &\leq \E\{ \{ \mu_1(\bX) \wedge \lambda_0(\bX) \} - \mu_0(\bX) \} / \E\{ \lambda_0(\bX) \} = \beta_u
\end{align*}

\subsection{Proof of Theorem 2}

Theorem 2 follows from the bounds on $\alpha$ and $\beta$ given in Theorem 1, along with the expression for $\psi$ given in Proposition 2. In particular, note that the bounds on $\beta$ imply
$$ \beta_\ell\E\{ \lambda_0(\bX)\}  \leq \beta \E\{ \lambda_0(\bX)\} \leq \beta_u \E\{ \lambda_0(\bX)\} $$
where we used the fact that 
$$ \Pb(S^0=S^1=1 ) = \Pb(S^0=1) = \E\{ \Pb(S=1 \mid \bX, Z=0)\} > 0 $$
where the first equality comes from monotonicity, and the second from consistency, positivity, and unconfoundedness. \\

Now we consider three cases depending on whether the above bounds on the numerator are positive or negative. All three results follow from the fact that $c>0$ and $0 < \alpha_\ell \leq \alpha \leq \alpha_u \leq 1$, we have $c/\alpha_u \leq c/\alpha_\ell$ and $-c/\alpha_u \geq -c/\alpha_\ell$. If both numerator bounds are non-negative so that $0 \leq \beta_\ell$ then
$$ \frac{ \beta_\ell\E\{ \lambda_0(\bX)\} }{\alpha_u} \leq \frac{ \beta_\ell\E\{ \lambda_0(\bX)\} }{\alpha} \leq \psi \leq \frac{\beta_u\E\{ \lambda_0(\bX)\} }{\alpha} \leq \frac{\beta_u\E\{ \lambda_0(\bX)\} }{\alpha_\ell} . $$

Similarly if both numerator bounds are zero or negative so that $\beta_u \leq 0$ then
$$ \frac{ \beta_\ell\E\{ \lambda_0(\bX)\} }{\alpha_\ell} \leq \frac{ \beta_\ell\E\{ \lambda_0(\bX)\} }{\alpha} \leq \psi \leq \frac{\beta_u\E\{ \lambda_0(\bX)\} }{\alpha} \leq \frac{\beta_u\E\{ \lambda_0(\bX)\} }{\alpha_u} . $$

Finally if the lower numerator bound is non-positive and the upper is non-negative so that $\beta_\ell  \leq 0 \leq \beta_u $ then it follows that
$$ \frac{ \beta_\ell\E\{ \lambda_0(\bX)\} }{\alpha_\ell} \leq \frac{ \beta_\ell\E\{ \lambda_0(\bX)\} }{\alpha} \leq \psi \leq \frac{\beta_u\E\{ \lambda_0(\bX)\} }{\alpha} \leq \frac{\beta_u\E\{ \lambda_0(\bX)\} }{\alpha_\ell} . $$

This yields the desired result.

\subsection{Proof of identification when $\bX$ predicts $S$}

Here we consider the case where the covariates $\bX$ can perfectly predict selection $S$, in the sense that there exists some mapping $g: \mathcal{X} \mapsto \{0,1\}$ such that $S=g(\bX)$ wp1. We will show that in this case the bounds collapse to a single point, and the SCATE $\psi$ is identified. \\

First consider the bounds on $\beta$. Note that if $S=g(\bX)$ then
$$ \lambda_Z(\bX) = \E(S \mid \bX, Z) = \E\{g(\bX) \mid \bX, Z\} =g(\bX) = S . $$
Therefore $\lambda_0(\bX)-\lambda_1(\bX)=S-S=0$ and the lower bound on $\beta$ is given by
$$ \beta_\ell = \frac{ \E[\{ \mu_1(\bX) + \lambda_0(\bX) - \lambda_1(\bX) \}_+ - \mu_0(\bX)] }{ \E\{ \lambda_0(\bX) \}} =  \frac{ \E\{ \mu_1(\bX) - \mu_0(\bX) \} }{ \E\{ \lambda_0(\bX) \}} . $$
Similarly 
$$ \mu_1(\bX) \wedge \lambda_0(\bX) = \mu_1(\bX) \wedge S = \mu_1(\bX) $$
because 
$$ \mu_Z(\bX) = \E(SY \mid \bX, Z) = \E\{ g(\bX) Y \mid \bX, Z \} = g(\bX) \E(Y \mid \bX, Z) = S\E(Y \mid \bX, Z) \leq S  $$
since $Y \in [0,1]$. Therefore the upper bound on $\beta$ is given by
$$ \beta_u = \frac{ \E[ \{  \mu_1(\bX) \wedge \lambda_0(\bX) \} - \mu_0(\bX) ]}{\E\{ \lambda_0(\bX)\}} = \frac{ \E\{ \mu_1(\bX) - \mu_0(\bX) \} }{ \E\{ \lambda_0(\bX) \} } $$
and thus matches the lower bound, so that $\beta$ is point-identified. \\

Now consider the bounds on $\alpha$. Recall $\theta_z(a \mid \bX) = \Pb(A^z=a \mid \bX)$ based on the identifying assumptions. If we also have $S=g(\bX)$ then 
\begin{align*}
\theta_z(a \mid \bX) &= \Pb\{A=a \mid \bX, g(\bX), Z=z\} = \Pb(A=a \mid \bX, S,Z=z) \\
&= \Pb(A^z=a \mid \bX, S^z)
\end{align*}
so that $S=g(\bX)$ implies $A^z \ind S^z \mid \bX$. Hence
$$ \theta_z(a \mid \bX) = \Pb(A^z=a \mid \bX, S^z=1) . $$
But $A^z \in \{0,1\}$ when $S^z=1$, so that 
$$ \theta_1(1 \mid \bX) - \theta_0(1 \mid \bX) = \{1 - \theta_1(0 \mid \bX) \} - \{ 1 - \theta_0(0 \mid \bX) \} $$
and therefore the lower and upper bounds $\alpha_\ell$ and $\alpha_u$ are equal, so that $\alpha$ is point-identified.

\subsection{Proof of Theorem 3}

\subsubsection{Efficient influence function for $\psi^*(\bdelta)$}

First we prove two lemmas describing efficient influence functions for bounds on $(\alpha,\beta)$.

\begin{lemma}
Under Condition 1 the efficient influence functions for the bounds $(\alpha_\ell,\alpha_u)$ on the survivor-complier proportion are given by $\varphi^{(\alpha)}_{\ell/u} - \alpha_{\ell/u}$ for
\begin{equation*}
\begin{gathered}
\varphi^{(\alpha)}_\ell = \one\Big\{ \theta_0(0 \mid \bX)> \theta_1(0 \mid \bX) \Big\} \Big[ \phi_1\{\one(A \neq 0)\} - \phi_0\{\one(A \neq 0)\} \Big]  \\
\varphi^{(\alpha)}_u = \phi_1\{\one(A =1)\} - \phi_0\{\one(A = 1)\}.
\end{gathered}
\end{equation*}
\end{lemma}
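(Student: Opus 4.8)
The plan is to treat the two bounds separately, since the upper bound $\alpha_u$ is a smooth (linear) functional whereas the lower bound $\alpha_\ell$ involves a nonsmooth positive part. For the upper bound I would first observe that $\theta_z(1\mid\bX)=\E\{\one(A=1)\mid\bX,Z=z\}$, so that $\alpha_u=\E[\E\{\one(A=1)\mid\bX,Z=1\}]-\E[\E\{\one(A=1)\mid\bX,Z=0\}]$ is merely a difference of two parameters of the form $\E\{\E(T\mid\bX,Z=z)\}$ with $T=\one(A=1)$. Since the paper already identifies $\phi_z(T)$ as the uncentered efficient influence function of such a parameter, linearity of the pathwise derivative immediately yields $\varphi_u^{(\alpha)}-\alpha_u$ as the efficient influence function, and no regularity condition is needed here.

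For the lower bound I would first recast it in a form matching the claimed influence function. Using $\one(A=0)=1-\one(A\neq 0)$ one checks that $\theta_0(0\mid\bX)-\theta_1(0\mid\bX)=m_1(\bX)-m_0(\bX)=:g(\bX)$, where $m_z(\bX)=\E\{\one(A\neq 0)\mid\bX,Z=z\}$, so that $\alpha_\ell=\E[\{g(\bX)\}_+]$ and $\one\{\theta_0(0\mid\bX)>\theta_1(0\mid\bX)\}=\one\{g(\bX)>0\}$. The plan is then to compute the pathwise derivative of $\alpha_\ell$ along a regular parametric submodel $\{\Pb_t\}$ through $\Pb$ at $t=0$ with score $s$. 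Differentiating $\E_t[\{g_t(\bX)\}_+]$ produces two contributions: one from perturbing the outer law of $\bX$, giving $\E[\{g(\bX)\}_+\,s]$, and one from perturbing $g_t$ inside the positive part.

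The crux, and the step I expect to be the main obstacle, is handling the kink of the map $x\mapsto x_+$. Away from $g=0$ the positive part is differentiable with derivative $\one\{g>0\}$, so formally the inner contribution is $\E[\one\{g(\bX)>0\}\,\dot g(\bX)]$, with $\dot g$ the pathwise derivative of $g_t$; the difficulty is the nondifferentiability on $\{g=0\}$. Here Condition 1, namely $\Pb\{\theta_1(0\mid\bX)=\theta_0(0\mid\bX)\}=\Pb\{g(\bX)=0\}=0$, is exactly what is required: it forces the nondifferentiable set to be $\Pb$-null, so that by dominated convergence the kink contributes nothing and the interchange of differentiation and expectation is justified. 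In effect this lets me treat the weight $w(\bX)=\one\{g(\bX)>0\}$ as a \emph{fixed, known} function of $\bX$ when computing the remaining derivative.

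With the weight frozen, I would invoke the building-block fact that the uncentered efficient influence function of $\E\{w(\bX)\,\E(T\mid\bX,Z=z)\}$, for known $w$, is $w(\bX)\,\phi_z(T)$; this follows from the same residual-plus-projection decomposition used to define $\phi_z$, the sole change being multiplication by $w$ (the orthogonality of the weighted residual $w\,\one(Z=z)\pi_z(\bX)^{-1}\{T-m_z\}$ to the scores for the law of $\bX$ and of $Z\mid\bX$ is unaffected). Applying this with $w=\one\{g>0\}$ and $T=\one(A\neq 0)$ to each of $m_1$ and $m_0$ and subtracting shows the total pathwise derivative equals $\E\big[\{\varphi_\ell^{(\alpha)}-\alpha_\ell\}\,s\big]$, so $\varphi_\ell^{(\alpha)}-\alpha_\ell$ is a gradient. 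Finally, since the model is nonparametric its tangent space is all mean-zero square-integrable functions, whence any gradient is the unique efficient influence function; I would close by verifying $\E\{\varphi_\ell^{(\alpha)}\}=\alpha_\ell$, confirming that the stated centering is correct.
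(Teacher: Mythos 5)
Your proposal is correct and takes essentially the same route as the paper: you handle $\alpha_u$ as a standard covariate-adjusted mean-difference functional with known efficient influence function, and for $\alpha_\ell$ you use Condition 1 to render the kink set $\{\theta_0(0\mid\bX)=\theta_1(0\mid\bX)\}$ null so the indicator $\one\{\theta_0(0\mid\bX)>\theta_1(0\mid\bX)\}$ can be treated as a fixed known weight, exactly as the paper does by adapting the optimal-treatment-regime argument of van der Laan and Luedtke. The only difference is in how the negligible kink contribution is certified---the paper proves an explicit $o(|\epsilon|)$ remainder bound along the submodel via $|\one(\gamma_\epsilon>0)-\one(\gamma>0)|\leq \one(|\gamma|<|\gamma_\epsilon-\gamma|)$, while you invoke almost-everywhere differentiability of the positive part plus dominated convergence---which are two phrasings of the same interchange and do not constitute a genuinely different proof.
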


\begin{proof}
The fact that $\varphi_u^{(\alpha)} - \alpha_u$ is the efficient influence function for $\alpha_u$ is well-known, since $\alpha_u$ is mathematically equivalent to the $\bX$-adjusted average treatment effect of $Z$ on $\one(A=1)$. This result has been previously discussed by \textcite{robins1994estimation, hahn1998role, scharfstein1999adjusting} among others. Thus we need only consider the parameter $\alpha_\ell$. \\

The lower bound $\alpha_\ell$ is more delicate, in particular since it is a non-smooth function of the nuisance functions $\theta_z(0 \mid \bX)$. Nonetheless we can adapt a result from the optimal treatment regime literature to give conditions under which it is pathwise differentiable (i.e., has an influence function) and regularly estimable at $\sqrt{n}$ rates. \\

Letting $\gamma_1(\bX) = \theta_1(0 \mid \bX) - \theta_0(0 \mid \bX)$, note that 
$$ \alpha_\ell = \E\{ \gamma_1(\bX)_+ \} = \E[\gamma_1(\bX) \one\{ \gamma_1(\bX) > 0 \} ] . $$
Now we can use the same logic as in Theorem 2 and Lemma 2 of \textcite{van2014targeted} to show that the influence function for $\alpha_\ell$ under Condition 1 is the same as that treating the indicator $\one\{\gamma_1(\bX) > 0\}$ in $\alpha_\ell$ as known. Hence under Condition 1 the non-smoothness of $\alpha_\ell$ is inconsequential. \\

Specifically, letting $\{ \Pb_\epsilon : \epsilon \in \R\}$ denote a smooth parametric submodel passing through $\Pb$ at $\epsilon=0$ (e.g., $d\Pb_\epsilon=(1 + \epsilon h) d\Pb$ for bounded mean-zero $h=h(\bO)$), we have (suppressing the subscript on $\gamma_1$ for simplicity)
\begin{align*}
\alpha_\ell(\Pb_\epsilon) - \alpha_\ell(\Pb) &= \int \gamma_\epsilon \one(\gamma_\epsilon > 0) \ d\Pb_\epsilon - \int \gamma \one(\gamma > 0) \ d\Pb \\
&= \int \one(\gamma > 0) (\gamma_\epsilon \ d\Pb_\epsilon - \gamma \ d\Pb) + \int \gamma_\epsilon \Big\{ \one(\gamma_\epsilon > 0) - \one(\gamma > 0) \Big\} \ d\Pb_\epsilon .
\end{align*}
The first term, after dividing by $\epsilon$ and letting $\epsilon \rightarrow 0$, is the pathwise derivative for the case where the indicator $\one(\gamma_1 > 0)$ is known. This corresponds to the influence function $\varphi_\ell^{(\alpha)}$ given in the statement of Lemma 2, which follows by standard chain rule arguments as in \textcite{hahn1998role} and elsewhere. Now, again following the same logic as in Lemma 2 of \textcite{van2014targeted}, we will show that the second term is $o(|\epsilon|)$ under Condition 1, and so does not contribute to the influence function. \\

In absolute value, the second term is bounded above by
\begin{align*}
\int  | \gamma_\epsilon | \Big|  \one &(\gamma_\epsilon > 0) - \one(\gamma > 0) \Big| \ d\Pb_\epsilon \leq \int | \gamma_\epsilon | \one(|\gamma| < | \gamma_\epsilon - \gamma | )  \ d\Pb_\epsilon \\
&\leq \int (| \gamma| + C |\epsilon| ) \one(|\gamma| < C|\epsilon| )  \ d\Pb_\epsilon \lesssim |\epsilon| (1 + |\epsilon|) \int \one(|\gamma| < C|\epsilon| )  \ d\Pb \\
&= |\epsilon| (1 + |\epsilon|) \Pb(0<|\gamma| < C|\epsilon| ) = o(|\epsilon|) 
\end{align*}
where the second bound follows since $|\gamma| = | \gamma_\epsilon - \gamma| - |\gamma_\epsilon| $ whenever $\gamma_\epsilon$ and $\gamma$ have different signs, the third and the fourth by the submodel construction and boundedness of $\gamma$, and the fifth by Condition 1. Hence this term does not contribute to the pathwise derivative, and the influence function for $\alpha_\ell$ is the same as if the indicator $\one(\gamma_1>0)$ was known.
\end{proof}

\begin{lemma}
Under Condition 2 the efficient influence functions for the bounds $(\beta_\ell,\beta_u)$ on the survivor intention-to-treat effect are given by $\E\{\phi_0(S)\}^{-1} \{ \varphi^{(\beta)}_{\ell/u}  - \beta_{\ell/u} \phi_0(S) \}$ for
\begin{equation*}
\begin{gathered}
\varphi^{(\beta)}_\ell = \one\left\{ \frac{\mu_1(\bX)}{\lambda_1(\bX)-\lambda_0(\bX)} > 1 \right\} \Big[ \phi_1\{S(Y-1)\} - \phi_0(S) \Big] - \phi_0(SY)   \\
\varphi^{(\beta)}_u = \one\left\{  \frac{\mu_1(\bX)}{\lambda_0(\bX)} > 1 \right\} \Big\{ \phi_0(S) - \phi_1(SY) \Big\} + \phi_1(SY) - \phi_0(SY) .
\end{gathered}
\end{equation*}
\end{lemma}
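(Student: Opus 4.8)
The plan is to exploit the fact that both bounds are ratios sharing a common denominator: by Theorem 1, $\beta_{\ell/u} = N_{\ell/u}/D$ with $D = \E\{\lambda_0(\bX)\}$, $N_\ell = \E[\{\mu_1(\bX)+\lambda_0(\bX)-\lambda_1(\bX)\}_+ - \mu_0(\bX)]$, and $N_u = \E[\{\mu_1(\bX)\wedge\lambda_0(\bX)\} - \mu_0(\bX)]$. For a ratio of smooth functionals the efficient influence function obeys the quotient rule $\text{EIF}(\beta) = D^{-1}\{\text{EIF}(N) - \beta\,\text{EIF}(D)\}$, so it suffices to find the efficient influence functions of $N_{\ell/u}$ and of $D$ separately and then assemble. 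The denominator is immediate: since $D = \E\{\E(S\mid\bX, Z=0)\}$, its uncentered efficient influence function is exactly $\phi_0(S)$, the standard missing-data functional. Writing $\varphi^{(\beta)}_{\ell/u}$ for the uncentered efficient influence function of the numerator $N_{\ell/u}$ and substituting $N_{\ell/u} = \beta_{\ell/u}D$ to cancel constant terms then yields the claimed form $\E\{\phi_0(S)\}^{-1}\{\varphi^{(\beta)}_{\ell/u} - \beta_{\ell/u}\phi_0(S)\}$.

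The only genuine difficulty is the numerator, since $N_\ell$ contains a positive part and $N_u$ a minimum, both nonsmooth functionals of the nuisance functions $(\mu_1, \lambda_0, \lambda_1)$. My plan is to reuse the pathwise-differentiability argument already developed for $\alpha_\ell$ in Lemma 2. Setting $w_\ell(\bX) = \one\{\mu_1/(\lambda_1-\lambda_0) > 1\}$ and $w_u(\bX) = \one\{\mu_1/\lambda_0 > 1\}$, I would show along a smooth submodel $\{\Pb_\epsilon\}$ that the contribution of the moving indicator to the pathwise derivative is $o(|\epsilon|)$, so the influence function coincides with the one obtained by treating $w_\ell$ and $w_u$ as known fixed functions of $\bX$. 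The two boundary sets that must be negligible are where $\mu_1(\bX)+\lambda_0(\bX)-\lambda_1(\bX)=0$ (the kink of the positive part in $N_\ell$) and where $\mu_1(\bX)=\lambda_0(\bX)$ (the kink of the minimum in $N_u$); bounding the near-boundary region by a term of order $\Pb(0<|\cdot|<C|\epsilon|)$ and invoking Condition 2 delivers the $o(|\epsilon|)$ bound exactly as in Lemma 2.

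With the indicators frozen, the remaining computation is routine termwise differentiation. I would use the building block that, for a fixed function $w(\bX)$, the functional $\E\{w(\bX)\E(T\mid\bX, Z=z)\}$ has uncentered efficient influence function $w(\bX)\phi_z(T)$, together with the linearity of the map $T\mapsto\phi_z(T)$. Rewriting the numerators region by region, $\mu_1\wedge\lambda_0 = w_u\lambda_0 + (1-w_u)\mu_1$ and $\{\mu_1+\lambda_0-\lambda_1\}_+ = w_\ell(\mu_1+\lambda_0-\lambda_1)$, and applying the building block to each term (with $T = SY$ generating $\mu_z$ and $T=S$ generating $\lambda_z$), the pieces collapse via identities such as $\phi_1(SY)-\phi_1(S) = \phi_1\{S(Y-1)\}$, with $\E\{\mu_0\}$ contributing $\phi_0(SY)$. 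This produces the stated $\varphi^{(\beta)}_\ell$ and $\varphi^{(\beta)}_u$, and the quotient rule finishes the argument.

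I expect the nonsmoothness step to be the main obstacle, both conceptually and in bookkeeping: here there are two distinct kinks rather than the single one handled for $\alpha_\ell$, and Condition 2 is precisely the statement rendering both boundary sets null, so the crux is verifying that this one condition simultaneously controls the positive-part and the minimum. Everything else---the quotient rule, the denominator, and the frozen-indicator termwise calculation---is standard semiparametric bookkeeping, and efficiency is automatic since in the nonparametric model the influence function is unique and hence coincides with the efficient one.
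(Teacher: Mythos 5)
Your strategy coincides with the paper's own proof in every structural respect: the denominator $\E\{\lambda_0(\bX)\}$ and the term $\E\{\mu_0(\bX)\}$ are treated as marginal means of outcomes missing at random with uncentered influence functions $\phi_0(S)$ and $\phi_0(SY)$; the positive part and the minimum are linearized by freezing their indicators along a smooth submodel, with the moving-indicator contribution shown to be $o(|\epsilon|)$ exactly as in the $\alpha_\ell$ argument, the two relevant null boundary sets $\{\mu_1(\bX)=\lambda_1(\bX)-\lambda_0(\bX)\}$ and $\{\mu_1(\bX)=\lambda_0(\bX)\}$ being precisely what Condition 2 is invoked for in the paper's appendix; and the bounds are assembled by the quotient rule with $N_{\ell/u}=\beta_{\ell/u}\,\E\{\lambda_0(\bX)\}$ giving the stated centered form. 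Your decompositions $\mu_1 \wedge \lambda_0 = w_u \lambda_0 + (1-w_u)\mu_1$ and $\{\mu_1+\lambda_0-\lambda_1\}_+ = w_\ell(\mu_1+\lambda_0-\lambda_1)$, and the observation that efficiency is automatic in the nonparametric model, are all sound, and your $\varphi^{(\beta)}_u$ calculation reproduces the displayed formula exactly.

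There is, however, one concrete problem: your assertion that the termwise collapse ``produces the stated $\varphi^{(\beta)}_\ell$'' is not right as written --- or rather, a faithful execution of your own plan exposes a sign error in the displayed statement. Since $\lambda_0$ enters the positive part with a \emph{plus} sign, freezing $w_\ell$ in $\E\{w_\ell(\mu_1+\lambda_0-\lambda_1)\} - \E\{\mu_0\}$ yields the uncentered influence function
\begin{equation*}
w_\ell \Big\{ \phi_1(SY) - \phi_1(S) + \phi_0(S) \Big\} - \phi_0(SY)
= w_\ell \Big[ \phi_1\{S(Y-1)\} + \phi_0(S) \Big] - \phi_0(SY) ,
\end{equation*}
with $+\phi_0(S)$ inside the bracket, not $-\phi_0(S)$ as in the lemma. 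The two expressions differ by $2 w_\ell \phi_0(S)$ and cannot be reconciled: the displayed $\varphi^{(\beta)}_\ell$ has mean $N_\ell - 2\E\{w_\ell \lambda_0(\bX)\} \neq \beta_\ell \E\{\lambda_0(\bX)\}$ in general. A quick sanity check makes this vivid: with $S \equiv 1$ (no selection, so $w_\ell \equiv 1$ whenever $\mu_1>0$) the printed expression reduces to $\phi_1(Y) - \phi_0(Y) - 2$, whose mean is off from the Wald numerator by the constant $2$, whereas the $+\phi_0(S)$ version gives exactly $\phi_1(Y)-\phi_0(Y)$. The paper's own proof never writes out this final combination (it ends with ``follows after combining these influence functions using the chain rule, and rearranging''), which is presumably how the discrepancy survived there, and it propagates to $\hat\beta_\ell$ in Section 3.3. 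So your method is correct and identical to the paper's, but you should carry the bookkeeping through explicitly and report the corrected lower influence function rather than asserting agreement with the printed one; note also that the kink set you (correctly) use, $\mu_1 = \lambda_1 - \lambda_0$, matches the appendix's reading of Condition 2 rather than the main text's $\lambda_0 - \lambda_1$.
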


\begin{proof}
This proof is similar to that of Lemma 2. The efficient influence function for the denominator $\E\{\lambda_0(\bX)\}$ of the bounds on $\beta$ is straightforward, as this parameter is mathematically equivalent to the marginal mean of an outcome missing at random \autocite{robins1994estimation, hahn1998role, scharfstein1999adjusting}. Specifically the influence function is given by $\phi_0(S)-\E\{\lambda_0(\bX)\}$. The same goes for the subtracted numerator term $\E\{\mu_0(\bX)\}$, which similarly has influence function $\phi_0(SY)-\E\{\mu_0(\bX)\}$.  \\

Now consider the non-smooth terms in the numerators. Letting $\gamma_2(\bX) = \mu_1(\bX) + \lambda_0(\bX) - \lambda_1(\bX)$, we have that the non-smooth term in the numerator of $\beta_\ell$ is
$$ \E\{ \gamma_2(\bX)_+\} = \E[\gamma_2(\bX) \one\{\gamma_2(\bX)>0\}]$$
and so can be analyzed exactly as in Lemma 2, except replacing Condition 1 with 
$$ \Pb\{ \mu_1(\bX) = \lambda_1(\bX)-\lambda_0(\bX)\}=0 $$
as in Condition 2, to ensure that $\gamma_2=\mu_1+\lambda_0-\lambda_1$ does not have a point mass at zero. Therefore the influence function for this parameter is the same as that treating the indicator $\one(\mu_1 > \lambda_1-\lambda_0)$ as known, which is given in the statement of Lemma 3. \\

Similarly, now letting $\gamma_3(\bX) = \mu_1(\bX)-\lambda_0(\bX)$, the non-smooth term in the numerator of $\beta_u$ is given by
\begin{align*} 
\E\{ \mu_1(\bX) \wedge \lambda_0(\bX) \} &= \E[ \mu_1(\bX) \one\{\lambda_0(\bX) \geq \mu_1(\bX)\} + \lambda_0(\bX) \one\{\mu_1(\bX) > \lambda_0(\bX) \}] \\
&= \E[ \mu_1(\bX) + \{\lambda_0(\bX)-\mu_1(\bX)\} \one\{\mu_1(\bX) > \lambda_0(\bX) \}] \\
&= \E[ \mu_1(\bX) - \gamma_3(\bX) \one\{ \gamma_3(\bX) > 0 \} ] .
\end{align*}
Again the first term above can be analyzed with standard techniques, as it is mathematically equivalent to the marginal mean of an outcome missing at random. The second term is exactly the same as in the previous two examples in Lemma 2 and the first part of this Lemma 3. The second part of Condition 2, that
$$ \Pb\{ \mu_1(\bX) = \lambda_0(\bX)\}=0 , $$
again ensures that $\gamma_3=\mu_1-\lambda_0$ does not have a point mass at zero. Therefore the influence function is the same as that treating the indicator $\one(\gamma_3>0)=\one(\mu_1>\lambda_0)$ as known. \\

Now that we have the influence functions for each the components making up the lower and upper bounds of $\beta$, the final result of Lemma 3 follows after combining these influence functions using the chain rule, and rearranging.
\end{proof}

That the efficient influence function of $\psi^*(\bdelta)$ is given as in Theorem 3, i.e., 
$$ \varphi(\boldsymbol\eta) = \{\varphi^{(\beta)}(\delta_2) - \psi^*(\bdelta) \varphi^{(\alpha)}(\delta_1) \} / {\E\{\varphi^{(\alpha)}(\delta_1)\} } $$
with $\varphi^{(\alpha)}(\delta_1) = \delta_1 \varphi_u^{(\alpha)} + (1-\delta_1) \varphi_\ell^{(\alpha)}$ and $\varphi^{(\beta)}(\delta_2) = \delta_2 \varphi_u^{(\beta)} + (1-\delta_2) \varphi_\ell^{(\beta)}$ 
now follows directly from Lemmas 2--3 together with the chain rule. 

\subsubsection{Asymptotic results for $\widehat\psi^*(\bdelta)$}

To ease notation, in this subsection we drop the dependence of all quantities on $\bdelta$, and write 
$$ \hat\psi = \widehat\psi^*(\bdelta) \ , \ \varphi_\alpha = \varphi^{(\alpha)}(\delta_1) \ , \ \varphi_\beta = \varphi^{(\beta)}(\delta_2) . $$
By definition, the estimator $\hat\psi$ solves the efficient influence function estimating equation, i.e., 
$\hat\psi = { \Pn\{ \varphi_\beta(\hat{\boldsymbol\eta}) \} }/ { \Pn\{ \varphi_\alpha(\hat{\boldsymbol\eta}) \} }$. 
Therefore we have
\begin{align*}
\hat\psi - \psi &=  \frac{ \Pn\{ \varphi_\beta(\hat{\boldsymbol\eta}) \} }{ \Pn\{ \varphi_\alpha(\hat{\boldsymbol\eta}) \} } - \frac{\Pb \{ \varphi_\beta(\boldsymbol\eta)\} }{ \Pb\{ \varphi_\alpha(\boldsymbol\eta)\} }  = \frac{ \Pb\{ \varphi_\alpha(\boldsymbol\eta)\} \Pn\{ \varphi_\beta(\hat{\boldsymbol\eta}) \}   - \Pb \{ \varphi_\beta(\boldsymbol\eta)\} \Pn\{ \varphi_\alpha(\hat{\boldsymbol\eta}) \}  }{ \Pn\{ \varphi_\alpha(\hat{\boldsymbol\eta}) \} \Pb\{ \varphi_\alpha(\boldsymbol\eta)\} } \\
&= \frac{ \Pb\{ \varphi_\alpha(\boldsymbol\eta)\} [ \Pn\{ \varphi_\beta(\hat{\boldsymbol\eta}) \} - \Pb\{ \varphi_\beta({\boldsymbol\eta}) \} ]  - \Pb \{ \varphi_\beta(\boldsymbol\eta)\} [ \Pn\{ \varphi_\alpha(\hat{\boldsymbol\eta}) \} - \Pb\{ \varphi_\alpha({\boldsymbol\eta}) \} ]  }{ \Pn\{ \varphi_\alpha(\hat{\boldsymbol\eta}) \} \Pb\{ \varphi_\alpha(\boldsymbol\eta)\} } \\
&= \Pn\{ \varphi_\alpha(\hat{\boldsymbol\eta}) \}^{-1} \left( \Big[ \Pn\{ \varphi_\beta(\hat{\boldsymbol\eta}) \} - \Pb\{ \varphi_\beta({\boldsymbol\eta}) \} \Big] - \psi \Big[ \Pn\{ \varphi_\alpha(\hat{\boldsymbol\eta}) \} - \Pb\{ \varphi_\alpha({\boldsymbol\eta}) \} \Big] \right) 
\end{align*}

Now we will analyze the two terms in parentheses. For the $\beta$ part we have 
\begin{align*}
\Pn\{ \varphi_\beta(\hat{\boldsymbol\eta}) \} - \Pb\{ \varphi_\beta({\boldsymbol\eta}) \} &= (\Pn-\Pb)\{ \varphi_\beta(\hat{\boldsymbol\eta}) - \varphi_\beta({\boldsymbol\eta}) \} + (\Pn-\Pb)  \varphi_\beta({\boldsymbol\eta}) - \Pb\{ \varphi_\beta(\hat{\boldsymbol\eta}) -  \varphi_\beta({\boldsymbol\eta}) \} 
\end{align*}
The first term is a centered empirical process and is $o_\Pb(1/\sqrt{n})$ by, for example, Lemma 19.24 in \textcite{van2000asymptotic} since $\varphi_\beta$ lies in a Donsker class and $\|\varphi_\beta(\hat{\boldsymbol\eta}) - \varphi_\beta({\boldsymbol\eta}) \|^2=o_\Pb(1)$ by the assumptions of Theorem 3. With sample splitting this term will be $o_\Pb(1/\sqrt{n})$ under only the consistency assumption, without requiring any Donsker conditions. The second term is asymptotically normal after scaling by $\sqrt{n}$, by the central limit theorem. \\

The third term captures the effect of nuisance estimation, and for it we have
\begin{align*}
\Pb\{ \varphi_\beta(\hat{\boldsymbol\eta}) -  \varphi_\beta({\boldsymbol\eta}) \} &= \delta_2 \Pb\{ \varphi_{\beta,u}(\hat{\boldsymbol\eta}) -  \varphi_{\beta,u}({\boldsymbol\eta}) \} + (1-\delta_2) \Pb\{ \varphi_{\beta,\ell}(\hat{\boldsymbol\eta}) -  \varphi_{\beta,\ell}({\boldsymbol\eta}) \} .
\end{align*}
First consider the first term on the right hand side, referring to the definition of $\varphi_\beta$ in Lemma 3. Repeated iterated expectation shows that
$$ \Pb\{ \hat\phi_z(SY) - \phi_z(SY) \} \lesssim \| \hat\pi_1 - \pi_1 \| \| \hat\mu_z - \mu_z \| . $$
Let $\gamma_3=\mu_1-\lambda_0$ as before and also let $\phi = \phi_0(S)-\phi_1(SY)$. Then the same arguments, combined with the result from Theorem 3 of \textcite{van2014targeted}, show that
\begin{align*} 
\Pb\{\one(\hat\gamma_3>0) \hat\phi &- \one(\gamma_3>0) \phi \} = \Pb\{\one(\hat\gamma_3>0) (\hat\phi-\phi) + \{\one(\hat\gamma_3>0) - \one(\gamma_3>0)\} \phi \} \\
&\lesssim \| \hat\pi_1 - \pi_1 \| \Big( \| \hat\mu_1 - \mu_1 \| +  \|\hat\lambda_0 - \lambda_0 \| \Big) + \| \hat\gamma_3 - \gamma_3 \| \sqrt{ \Pb(\gamma_3 < | \hat\gamma_3 - \gamma_3 | ) } .
\end{align*}
Therefore, combining the above results, we have that $\Pb\{ \varphi_{\beta,u}(\hat{\boldsymbol\eta}) -  \varphi_{\beta,u}({\boldsymbol\eta}) \}$ is bounded above (up to constants) by
\begin{align*}
\| \hat\pi_1 - \pi_1 \| \Big( \max_z \| \hat\mu_z - \mu_z \| +  \|\hat\lambda_0 - \lambda_0 \| \Big) + \| \hat\gamma_3 - \gamma_3 \| \sqrt{ \Pb(\gamma_3 < | \hat\gamma_3 - \gamma_3 | ) }
\end{align*}
and this is $o_\Pb(1/\sqrt{n})$ by Assumptions 3--4 of Theorem 3. \\

The same logic shows that, for $\gamma_2 = \mu_1 - (\lambda_1-\lambda_0)$, 
\begin{align*}
\Pb\{ \varphi_{\beta,\ell}(\hat{\boldsymbol\eta}) -  \varphi_{\beta,\ell}({\boldsymbol\eta}) \} \lesssim \| \hat\pi_1 - \pi_1 \| &\Big( \max_z  \|\hat\lambda_z - \lambda_z \| +  \max_z  \| \hat\mu_z - \mu_z \| \Big) \\
& + \| \hat\gamma_2 - \gamma_2 \| \sqrt{ \Pb(\gamma_2 < | \hat\gamma_2 - \gamma_2 | ) },
\end{align*}
and this is also $o_\Pb(1/\sqrt{n})$ by Assumptions 3--4 of Theorem 3. \\

Therefore $\Pb\{ \varphi_\beta(\hat{\boldsymbol\eta}) -  \varphi_\beta({\boldsymbol\eta}) \} = o_\Pb(1/\sqrt{n})$, which implies
$$ \Pn\{ \varphi_\beta(\hat{\boldsymbol\eta}) \} - \Pb\{ \varphi_\beta({\boldsymbol\eta}) \} = (\Pn-\Pb)  \varphi_\beta({\boldsymbol\eta}) + o_\Pb(1/\sqrt{n}) . $$ 

Similarly for the $\alpha$ part of the earlier decomposition we have
\begin{align*}
\Pn\{ \varphi_\alpha(\hat{\boldsymbol\eta}) \} - \Pb\{ \varphi_\alpha({\boldsymbol\eta}) \} &= (\Pn-\Pb)\{ \varphi_\alpha(\hat{\boldsymbol\eta}) - \varphi_\alpha({\boldsymbol\eta}) \} + (\Pn-\Pb)  \varphi_\alpha({\boldsymbol\eta}) - \Pb\{ \varphi_\alpha(\hat{\boldsymbol\eta}) -  \varphi_\alpha({\boldsymbol\eta}) \} 
\end{align*}
The first term is again centered empirical process and is $o_\Pb(1/\sqrt{n})$ since $\varphi_\alpha$ lies in a Donsker class and $\|\varphi_\alpha(\hat{\boldsymbol\eta}) - \varphi_\alpha({\boldsymbol\eta}) \|^2=o_\Pb(1)$ by the assumptions of Theorem 3. The second term is asymptotically normal after scaling by $\sqrt{n}$, by the central limit theorem. \\

As with the $\beta$ part of the decomposition, we have
\begin{align*}
\Pb\{ \varphi_\alpha(\hat{\boldsymbol\eta}) -  \varphi_\alpha({\boldsymbol\eta}) \} &= \delta_1 \Pb\{ \varphi_{\alpha,u}(\hat{\boldsymbol\eta}) -  \varphi_{\alpha,u}({\boldsymbol\eta}) \} + (1-\delta_1) \Pb\{ \varphi_{\alpha,\ell}(\hat{\boldsymbol\eta}) -  \varphi_{\alpha,\ell}({\boldsymbol\eta}) \} .
\end{align*}
Standard iterated expectation arguments show that
$$ \Pb\{ \varphi_{\alpha,u}(\hat{\boldsymbol\eta}) -  \varphi_{\alpha,u}({\boldsymbol\eta}) \} \lesssim \| \hat\pi_1 - \pi_1 \| \Big( \max_z \| \hat\theta_{z1} - \theta_{z1} \| \Big) $$
and this is $o_\Pb(1/\sqrt{n})$ by Assumptions 3--4 of Theorem 3. As with $\beta$, for $\gamma_1 = \theta_{10} -\theta_{00}$ we have
$$ \Pb\{ \varphi_{\alpha,\ell}(\hat{\boldsymbol\eta}) -  \varphi_{\alpha,\ell}({\boldsymbol\eta}) \} \lesssim \| \hat\pi_1 - \pi_1 \| \Big( \max_z \| \hat\theta_{z0} - \theta_{z0} \| \Big) + \| \hat\gamma_1 - \gamma_1 \| \sqrt{ \Pb(\gamma_1 < | \hat\gamma_1 - \gamma_1 | ) } , $$
and this is also  $o_\Pb(1/\sqrt{n})$ by Assumptions 3--4 of Theorem 3. \\

Therefore $\Pb\{ \varphi_\alpha(\hat{\boldsymbol\eta}) -  \varphi_\alpha({\boldsymbol\eta}) \} = o_\Pb(1/\sqrt{n})$, which implies
$$ \Pn\{ \varphi_\alpha(\hat{\boldsymbol\eta}) \} - \Pb\{ \varphi_\alpha({\boldsymbol\eta}) \} = (\Pn-\Pb)  \varphi_\alpha({\boldsymbol\eta}) + o_\Pb(1/\sqrt{n}) . $$

Hence
\begin{align*}
\hat\psi - \psi &= \Pn\{ \varphi_\alpha(\hat{\boldsymbol\eta}) \}^{-1} \left[ (\Pn-\Pb) \Big\{ \varphi_\beta({\boldsymbol\eta}) -  \psi \varphi_\alpha({\boldsymbol\eta}) \Big\} \right] + o_\Pb(1/\sqrt{n})
\end{align*}
which yields the result of the theorem, after applying the continuous mapping theorem and Slutsky's theorem (noting that $\Pn\{ \varphi_\alpha(\hat{\boldsymbol\eta}) \} - \Pb\{ \varphi_\alpha({\boldsymbol\eta}) \} = o_\Pb(1)$ by the above results).

\end{document}